\documentclass[reqno,xcolor=dvipsnames]{amsart}
\usepackage[dvipsnames]{xcolor}
\usepackage{geometry}

\geometry{a4paper,hcentering,vcentering,outer=2.8cm,top=3cm}

\usepackage[utf8x]{inputenc}
\usepackage{amsmath}
\usepackage{graphicx}
\usepackage{rotating}
\usepackage{wasysym}
\DeclareGraphicsExtensions{.pdf,.png,.jpg}

\usepackage{geometry}
\geometry{a4paper,hcentering,vcentering,outer=2.8cm,top=3cm}

\usepackage[round,comma]{natbib}                %

\usepackage{enumerate}

\usepackage{amsmath}
\usepackage{amsthm}
\usepackage{amsfonts}
\usepackage{amssymb}
\usepackage{dsfont}

\usepackage{nicefrac}
\usepackage{booktabs}
\usepackage{mathrsfs}
\usepackage{bm} 
\usepackage{mathtools}                          %

\newcommand{\ccF}{{\mathscr F}}
\newcommand{\ccG}{{\mathscr G}}
\newcommand{\ccH}{{\mathscr H}}

\newcommand{\cV}{{\mathcal V}}

\newcommand{\Ind}{{\mathds 1}}
\newcommand{\ind}[1]{\Ind_{\{#1\}}}

\newcommand{\R}{\mathbb{R}}

\newcommand{\bbF}{\mathbb{F}}

\newcommand{\bbG}{\mathbb{G}}

\newtheorem{theorem}{Theorem}[section]
\newtheorem{corollary}[theorem]{Corollary}      %
\newtheorem{lemma}[theorem]{Lemma}              %
\newtheorem{proposition}[theorem]{Proposition}  %

\theoremstyle{definition}
\newtheorem{definition}[theorem]{Definition} %
\newtheorem{remark}[theorem]{Remark}%
\newtheorem{assumption}[theorem]{Assumption}%

\numberwithin{equation}{section}

\setlength{\marginparwidth}{2.5cm}

\newcommand{\QcirP}{Q\mkern-2mu\odot\mkern-2mu P}

\DeclareMathOperator{\image}{im}

\usepackage{natbib}
\usepackage[colorlinks,urlcolor=red,citecolor=blue,linkcolor=red]{hyperref}

\usepackage{stmaryrd}

\DeclareMathOperator{\loc}{loc}

\begin{document}

\title[Benchmark-Neutral Risk-Minimization]{Benchmark-Neutral Risk-Minimization \\
for insurance products and nonreplicable claims}

		\author{Michael Schmutz}
		\address{Head of Financial Risks, Swiss Financial Market Supervisory Authority (FINMA), Laupenstrasse 27, 3003 Bern, and \'{E}cole Polytechnique F\'{e}d\'{e}rale de Lausanne (EPFL), Station 8, 1015, Lausanne, Switzerland. \newline \emph{This publication reflects solely the personal opinion of the author and is not binding for FINMA.}}
             \email{m.schmutz@epfl.ch}
		\author{Eckhard Platen}
		\address{University of Technology Sydney,  School of Mathematical and Physical Sciences, and  Finance Discipline Group, Australian National University, Canberra, College of Business and Economics}
		\email{eckhard.platen@uts.edu.au}

		\author{Thorsten Schmidt}
		\address{Albert-Ludwigs University of Freiburg, Ernst-Zermelo Str. 1, 79104 Freiburg, Germany.}
    \email{thorsten.schmidt@stochastik.uni-freiburg.de}

\thanks{Acknowledgements:\/ The authors would like to express their gratitude for receiving valuable suggestions on the  paper by Stefan Tappe and Gerhard Stahl. }

\maketitle

\begin{abstract}
In this paper we study the pricing and hedging of nonreplicable contingent claims, such as long-term insurance contracts like variable annuities. Our approach is based  on the benchmark-neutral pricing framework of \cite{Platen24}, which differs from the classical benchmark approach by using the stock growth optimal portfolio as the num\'eraire. In typical settings, this choice leads to an equivalent martingale measure, the \emph{benchmark-neutral} measure. The resulting prices  can be  significantly lower  than the respective risk-neutral  ones, making this approach  attractive for long-term %
risk-management. 
We derive the associated risk-minimizing hedging strategy under the assumption that the contingent claim possesses a martingale decomposition.  For a set of  nonreplicable contingent claims, these  strategies allow  monitoring  the working capital  required to generate their  payoffs and enable an assessment of  the resulting diversification effects. Furthermore,  an algorithmic refinancing strategy is proposed that allows   modeling   the working capital. Finally,  insurance-finance arbitrages of the first kind are introduced and it is demonstrated that benchmark-neutral pricing effectively avoids such arbitrages.
\end{abstract}

\ \\
\noindent {\em JEL Classification:\/} G10, G11

\noindent {\em Mathematics Subject Classification:\/} 62P05, 60G35, 62P20

\noindent{\em Key words and phrases:\/}   growth optimal portfolio, change of  num\'eraire,  benchmark-neutral pricing, benchmark-neutral risk-minimization,    
  working capital, refinancing strategy, insurance-finance arbitrage, arbitrage of the first kind.

	\section{Introduction}\label{section.intro}

The pricing and hedging of nonreplicable long-term contingent claims has been a challenging task in finance and insurance. 
Portfolio strategies that aim to replicate such  contingent claims generate, in general,
fluctuating profit and loss processes. A major step  toward a satisfactory solution to this problem was made by the F\"ollmer-Sondermann-Schweizer local risk-minimization approach, originally proposed in  \cite{foellmer-sondermann-86}, which became further developed in \cite{FollmerSc91},  \cite{Schweizer91},  \cite{Schweizer01}, and \cite{FreySchmidt2012}. It minimizes the fluctuations of savings account-denominated profit and loss processes  using a quadratic criterion under an assumed minimal equivalent martingale measure. For a given contingent claim, the profit and loss process monitors in units of the savings account the adapted inflow
and outflow of capital to and from the hedge portfolio and forms a local martingale that  is orthogonal to savings account-denominated traded securities under the assumed  minimal equivalent martingale measure. The local risk-minimization approach provides an intuitively
appealing methodology for the pricing and hedging of nonreplicable contingent claims and  %
was introduced into the actuarial literature by \cite{Moller98} and \cite{Moller01}. Other approaches considering the pricing of non-traded quantities rely on a certain pasting of the risk-neutral measure with the objective or real-world probability measure; see for example \cite{dybvig1992hedging} and \cite{ArtznerEiseleSchmidt2024} and references therein. Since in our setting, a risk-neutral measure might not exist, we will focus on the weaker notion of arbitrages of the first kind; see \cite{kardaras2012market} and \cite{karatzas2021portfolio} for a detailed exposition.   

More specifically, the paper \cite{Platen24} shows that when modeling realistically the long-term  dynamics of well-diversified stock portfolios  it is unlikely that an equivalent risk-neutral probability measure 
  exists. If one ignores this fact and prices under local risk-minimization using an assumed putative minimal equivalent  martingale measure, then most  life insurance and pension contracts turn out to be   more expensive than necessary.   To achieve the most economical   hedges of  long-term contingent claims,  risk-neutral pricing and  local risk-minimization need to be replaced by a pricing method that leads to the  minimal possible prices. 
  
By applying the benchmark approach and its 
 pricing under the real-world probability measure, see \cite{PlatenHe06},  the method of local risk-minimization was modified in \cite{BiaginiCrPl14}.  This modified pricing method employs  the growth optimal portfolio (GOP)   of the entire market as num\'eraire  and the real-world probability measure as pricing measure. Since restrictive second-moment conditions with respect to primary security accounts   remain still present when employing this  risk-minimization method,  \cite{DuPl16} generalized this method to the more general  method of benchmarked risk-minimization. For a given contingent claim, the latter method is requesting  the GOP-denominated profit and loss process  to be orthogonal to the GOP-denominated traded securities, and provides via its real-world pricing formula the minimal possible price.
 
Although benchmarked risk-minimization addresses several theoretical challenges, its practical implementation  remains still  a challenge. Specifically, it requires the ability  to liquidly trade the GOP of the  entire market. This GOP is a highly leveraged portfolio that goes long in the stock GOP (the GOP of the stocks without the savings account) and   short in the savings account; see \cite{FilipovicPl09}. However in practice, the GOP of the entire market is not a suitable  num\'eraire, as it can only be approximated   through a  leveraged portfolio, which, due to  discrete-time trading,  can  potentially become   negative. %
Moreover, constructing this GOP necessitates estimating the expected return of the stock GOP -- a notoriously difficult task.

A widely applicable pricing and risk-minimization method must  rely on  a  strictly positive,  liquidly tradable num\'eraire because such a num\'eraire   is essential when hedging contingent claims. For this important practical reason, the present paper proposes the concept of {\em benchmark-neutral (BN) risk-minimization},  
where the stock GOP  is employed as a num\'eraire. The stock GOP serves as a viable num\'eraire since it can be  approximated  by a guaranteed strictly positive, tradable stock portfolio; see \cite{PlatenRe20}.	The new concept of BN risk-minimization builds on the  results in \cite{Platen24}, where  BN pricing  of  replicable contingent claims was first introduced. %

 For the pricing and hedging of nonreplicable contingent claims, we introduce the new method of BN risk-minimization. This method  uses  the stock GOP  as a num\'eraire and the BN pricing measure for pricing. Most importantly, it monitors for a given stock GOP-denominated nonreplicable contingent claim the nonhedgeable part of its stock GOP-denominated price process in a manner that makes this process under the BN pricing measure a local martingale and orthogonal to all stock GOP-denominated traded securities. Orthogonality means here  that the products of the stock GOP-denominated nonhedgeable part and stock GOP-denominated traded securities are local martingales under the BN pricing measure. We also show that such a pricing rule avoids the introduction of insurance-finance arbitrages of the first kind.

	The paper studies  the book of liabilities of  a line of business (LOB) of  an insurance company or a pension fund. This book  consists of a  set  of not fully replicable contingent claims and we apply BN risk-minimization to such a set of contingent claims by  involving the notions of  production portfolio, working capital, and refinancing sequence. We  show how the diversification effect reduces the average working capital needed per contingent claim when managing the remaining randomness and indicate a way  to regulate and supervise the production of the payoffs of the contingent claims in a  book of liabilities of an LOB.

	      The paper is organized as follows: Section 2 introduces the  benchmark approach with its real-world pricing formula. The concept of BN pricing is described in Section 3. BN hedging is introduced in Section 4. The new method of BN risk-minimization is proposed in Section 5. Section 6 applies this method to the management of the  working capital of an LOB. Finally, Section 7 studies insurance-finance arbitrages and shows that BN pricing does not permit insurance-finance arbitrages of the first kind.

\section{The financial market}

The current paper  models  financial  markets under the usual frictionless assumptions, i.e.\ allowing for  continuous trading,    instantaneous investing and borrowing,   short sales with full use of proceeds, and   infinitely divisible securities. For simplicity, we assume no transaction costs.

\subsection*{Primary security accounts}

Let us consider a filtered probability space $(\Omega,\ccF,\bbF,P)$ satisfying the usual conditions, see, e.g., \cite{JacodShiryaev}. The filtration $\bbF=(\ccF_t)_{t \ge 0}$  describes the evolution of the market information
	 over time. 	 
The market consists of $n\in \{1,2,\dots\}$ \emph{primary security accounts}, such as  stocks, where all  dividends or payments are reinvested. Additionally, there exists a riskless \emph{ savings account} $S^0$. 
The primary security accounts, dominated in units of the  savings account, are denoted by $S^1,\dots,S^n$ and we assume that these are adapted,  nonnegative diffusions.  Additionally, we have $S^0\equiv 1$. %

Moreover, we assume that $S^1,\dots,S^n$ are driven by an $n$-dimensional standard $\bbF$-Brownian motion $W=(W^1,\dots,W^n)^\top$. It should be pointed out, that this does not imply completeness of the market, since $\bbF$ is not necessarily generated by $W$.

\begin{remark}
	Within this paper we denominate, for simplicity, in units of the savings account, which avoids the modeling of the interest rate. Of course, it would  be possible to denominate in units of the currency and quantify the resulting impact of stochastic interest rates. Moreover, it is  possible to generalize the setting to semimartingales, see, e.g., \cite{kardaras2012market} and \cite{DuPl16}.
\end{remark}
	 
\subsection*{The stock GOP}	 
We assume  for the given primary security accounts that there exists a continuous \emph{growth optimal portfolio} (GOP), which we denote by $S^*$, again denominated in units of the  savings account. We call this GOP the \emph{stock GOP}. Note that the investment universe of the stock GOP does not include the savings account $S^0$. The primary security accounts denominated in the stock GOP are denoted by %
\begin{align}\label{eq:X}
	{X}^j= %
	\frac{S^j}{S^*}
	,\qquad j =0, 1,\dots,n.
\end{align}
Under minimal assumptions, the securities $X^1,...,X^n$ are $\bbF$-$P$-local martingales (see \cite{FilipovicPl09}, \cite{kardaras2012market}). In particular, the  no arbitrage of the first kind (NA${}_1$) condition is for the market formed by $S^1,...,S^n$ equivalent to the existence of a local martingale deflator, such as $S^*$. 

Under the assumption that $S^*$ exists, it follows by  Theorem 3.1 of \cite{FilipovicPl09} that the unique stock GOP $S^*$  emerges in the form
	\begin{align}
		\label{e.4'}
	\frac{dS^{*}_t}{S^{*}_t}=\lambda^*_t \, dt + \sigma^*_t{}^\top \cdot  \, (  \sigma^*_t \, dt+d W_t), \quad t \ge 0
	\end{align}  
	  with fixed initial value $S^*_0>0$.
Here, $\lambda^*$ denotes the {\em net risk-adjusted return} of  $S^*$  and $\sigma^*$ the {\em volatility} of $S^*$. Both parameter processes can be obtained from the dynamics of $S$; see \cite{FilipovicPl09} for details. 

\smallskip

\subsection{The classical GOP}\label{sec:classical GOP}
 Let us  extend the above market, formed by the $n$ primary security accounts,  by adding the riskless savings account $S^0$ as an additional primary security account. We will show below, that the (unique) \emph{GOP of the extended market} always exists, and denote  it by    $S^{**}$. The use of the extended market GOP  $S^{**}$ as GOP is  in line with the classical benchmark approach.  In practical applications, however, it turns out that $S^{**}$ is a highly leveraged portfolio that goes long in the stock GOP and short in the savings account. This might cause  difficulties when trading  in discrete time because one might not be able to guarantee strict positivity for a tradable approximation of $S^{**}_t$. The guaranteed strict positivity is not a problem for the stock GOP $S^*_t$, which  can be approximated by a guaranteed strictly positive  total return stock index, as shown in \cite{PlatenRe20}.

The aim of the next result is to establish a precise connection to the classical GOP of the benchmark approach, namely the GOP $S^{**}$ of the extended market. %
To this end, we  add the risk-less savings account $S^0=1$ to the market formed by the $n$ risky primary security accounts $S^1,...,S^n$. %
 We  need to introduce a precise description of this market and 
 assume according to Theorem 3.1 in \cite{FilipovicPl09} the return process of the vector $\tilde S=(S^1,\dots,S^n)^\top$ to satisfy
	\begin{align}\label{eq: dS}
		\frac{d\tilde S_t}{\tilde S_t}= a_t \, dt + b_t \, dW_t=\lambda^*_t{\bf {1}}dt+b_t(\sigma^*_tdt+dW_t)=b_t(\sigma^{**}_tdt+dW_t) , \quad  t \ge 0,  
	\end{align}
with
\begin{equation}\label{eq:sigma **'} 
\sigma^{**}_t=\lambda^*_tb^{-1}_t{\bf {1}}+\sigma^*_t,\end{equation}
	where we write $\nicefrac{d\tilde S_t}{\tilde S_t}$ for the vector of stochastic differentials $\nicefrac{d\tilde S^j_t}{\tilde S^j_t}$, $j=1,\dots,n$, 
	with an adapted and locally integrable vector process  $a$, taking values in $\R^n$,  and an adapted,  locally square-integrable, non-singular matrix process $b$ taking values in $\R^{n \times n}$, respectively. 
	The vector of primary security accounts of the extended market (discounted in terms of the savings account) is denoted by $S=(S^0,\dots,S^n)^\top$ (in contrast to $\tilde S$ that is without the  savings account).

\begin{proposition}\label{prop:2.2}
	The market formed by $\tilde S=(S^1,\dots,S^n)^\top$ with dynamics given by \eqref{eq: dS} has the stock GOP $S^*$ as GOP satisfying \eqref{e.4'}. The market extended by the riskless savings account $S^0 \equiv 1$  admits the unique GOP $S^{**}$ satisfying
	\begin{align}\label{eq:S**}
		\frac{dS^{**}_t}{S^{**}_t} =  \sigma^{**}_t{}^\top   (\sigma_t^{**}  dt +  dW_t) , \quad t \ge 0.
	\end{align}
\end{proposition}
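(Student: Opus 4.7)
The plan is to invoke Theorem 3.1 of \cite{FilipovicPl09}, which characterizes the GOP of a given market as the unique strictly positive self-financing portfolio $V$ with $V_0=1$ such that every primary security account denominated in $V$ is a $P$-local martingale. Applied in turn to the market $\tilde S$ and then to the extended market $S$, this reduces the proposition to verifying that the two candidate num\'eraires defined by \eqref{e.4'} and \eqref{eq:S**} enjoy the required martingale property; uniqueness is then automatic from the cited theorem.

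For the first claim, I would start from the short computation, via It\^o's formula, that
\[
\frac{d(1/S^*_t)}{1/S^*_t} = -\lambda^*_t\,dt - \sigma^*_t{}^{\!\top} dW_t,
\]
and then apply the product rule to $X^j_t = S^j_t \cdot (1/S^*_t)$ for $j=1,\dots,n$. Using the dynamics \eqref{eq: dS}, the drift of $X^j$ works out to be proportional to $a^j_t - \lambda^*_t - (b_t\sigma^*_t)_j$, which vanishes by the decomposition $a_t = \lambda^*_t \mathbf{1} + b_t\sigma^*_t$ read off from \eqref{eq: dS}. Thus $X^1,\dots,X^n$ are $P$-local martingales, so $S^*$ is indeed the GOP of $\tilde S$.

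For the extended market, the same scheme applies to $S^{**}$ defined by \eqref{eq:S**}. Here It\^o's formula gives $d(1/S^{**}_t)/(1/S^{**}_t) = -\sigma^{**}_t{}^{\!\top} dW_t$, so already $X^0 := 1/S^{**}$ is a local martingale, as required of the new num\'eraire since $S^0 \equiv 1$. For $j\ge 1$, the product-rule computation for $S^j/S^{**}$ produces a drift proportional to $a^j_t - (b_t\sigma^{**}_t)_j$, which vanishes precisely because of the identity \eqref{eq:sigma **'}. Existence and uniqueness of $S^{**}$ follow again from Theorem 3.1 of \cite{FilipovicPl09}.

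The main obstacle is not analytic but conceptual bookkeeping: identifying why the same diffusion matrix $b_t$ leads to two different GOPs depending on whether the savings account is tradable. The key observation driving the proof is that passing from $\tilde S$ to $S$ replaces $\sigma^*$ by $\sigma^{**}=\lambda^*b^{-1}\mathbf{1}+\sigma^*$; the extra term $\lambda^*_t b_t^{-1}\mathbf{1}$ is exactly what is needed to cancel the drift $-\lambda^*_t$ of $1/S^*_t$, so that $1/S^{**}_t$ becomes a local martingale and thereby accommodates the riskless account as a primary security. Once this algebraic relation between $\sigma^*$ and $\sigma^{**}$ is in place, both martingale checks reduce to direct applications of It\^o's rule.
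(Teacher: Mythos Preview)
Your argument is correct and is a clean verification-by-deflator: you take the candidate processes defined by the SDEs \eqref{e.4'} and \eqref{eq:S**}, compute via It\^o that every primary security denominated in the candidate is a $P$-local martingale, and conclude via the num\'eraire characterization in \cite{FilipovicPl09}. The paper proceeds differently: it uses the \emph{constructive} side of Theorem~3.1 in \cite{FilipovicPl09}, writing down the block matrix
\[
M_t=\begin{pmatrix} b_tb_t^\top & \mathbf 1\\ \mathbf 1^\top & 0\end{pmatrix}
\quad\text{and}\quad
\tilde M_t=\begin{pmatrix} 0 & \mathbf 0^\top & 1\\ \mathbf 0 & b_tb_t^\top & \mathbf 1\\ 1 & \mathbf 1^\top & 0\end{pmatrix},
\]
checking the image conditions, and then reading off the portfolio weights $\pi^*_t$, $\pi^{**}_t$ and Lagrange multipliers $\lambda^*_t$, $\lambda^{**}_t$ from the linear systems $M_t(\pi^*_t,\lambda^*_t)^\top=(a_t,1)^\top$ and its extended analogue. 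This yields $\sigma^*_t=b_t^\top\pi^*_t$, $\sigma^{**}_t=b_t^\top\pi^{**}_t$, the budget constraint $\pi^{*\top}_t\mathbf 1=1$, and the identity $\lambda^{**}_t=0$.

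What each approach buys: your route is shorter and makes the mechanism behind \eqref{eq:sigma **'} completely transparent (the shift $\lambda^*_tb_t^{-1}\mathbf 1$ is precisely what kills the residual drift of $1/S^*$). The paper's route, by contrast, actually produces the GOP \emph{as a portfolio}, i.e.\ it exhibits the weights $\pi^*_t$ and $\pi^{**}_t$, and in particular records $\sigma^*_t=b_t^\top\pi^*_t$ together with $\pi^{*\top}_t\mathbf 1=1$. These identities are not needed for Proposition~\ref{prop:2.2} itself but are used immediately afterwards in the proof of Lemma~\ref{sigma*minus} (the step $\sigma^{*\top}_tb_t^{-1}\mathbf 1=\pi^{*\top}_t\mathbf 1=1$). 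In your write-up you should either derive this separately or note that your deflator computation for $X^j$, $j=1,\dots,n$, already implies $a_t=\lambda^*_t\mathbf 1+b_t\sigma^*_t$, which is the substance of that identity. A second small point: your verification assumes from the outset that the solutions to \eqref{e.4'} and \eqref{eq:S**} arise as self-financing wealth processes; strictly speaking you still need the existence half of Theorem~3.1 (or the invertibility of $b_t$, which lets you synthesize any diffusion coefficient as a portfolio) before the uniqueness-by-deflator argument can bite.
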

 
\begin{proof}
It is straightforward to verify the assumptions of Theorem 3.1 in \cite{FilipovicPl09} for the case of the market formed by the $n$ risky primary security accounts.
	To derive \eqref{e.4'}, define the $\R^{(n+1)\times(n+1)}$-valued process $M$ by
	$$ M_t = \left(\begin{matrix}
		b_t \cdot b_t^\top  & \mathbf{1}\\ \mathbf{1}^\top  & 0
	\end{matrix}\right), \qquad t \ge 0. $$
	Hence, except from a $dt\otimes dP$-nullset and because $b_t$ is  nonsingular, we have
	$$  {a_t \choose 1} \in \image (M_t) $$
and it follows by Lemma 3.1 and Theorem 3.1 in \cite{FilipovicPl09} \eqref{e.4'}, where the vector $(	\pi^*_t ,\lambda^*_t )^\top$ solves the matrix equation
	$$ M_t \left(\begin{matrix}
	\pi^*_t \\\lambda^*_t 
	\end{matrix}\right)=	  {a_t \choose 1}  $$\\
	yielding \begin{equation}\label{sigma*t}
	\sigma^*_t=b_t^\top \pi^*_t , \qquad\pi_t^{*}{}^\top {\bf {1}}=1, \qquad t \ge 0.\end{equation}
 Analogously, the GOP $S^{**}_t$ of the extended market  exists  by Theorem 3.1 in \cite{FilipovicPl09}. To see this, define the $\R^{(n+2)\times(n+2)}$-valued process $\tilde M$ by
 $$ \tilde M_t = \left(\begin{matrix}
0 & {\bf {0}}^\top  & 1\\ {\bf {0}} &  b_t \cdot b_t^\top & \mathbf{1} \\ 1 &\mathbf{1}^\top   & 0
 \end{matrix}\right), \qquad t \ge 0. $$
 Except from a $dt\otimes dP$-nullset, we have
 $$ \left(\begin{matrix} 0 \\ a_t \\ 1 \end{matrix}\right) \in \image (\tilde M_t) $$ 
 and it follows   by Theorem 3.1 in \cite{FilipovicPl09} \eqref{eq:S**} and \eqref{eq:sigma **'}, where the vector $( \pi^{**,0}_t,(\pi^{**}_t)^\top ,\lambda^{**}_t)^\top$ solves the matrix equation 
 	$$ \tilde M_t \left(\begin{matrix}
 	\pi^{**,0}_t\\ \pi^{**}_t \\\lambda^{**}_t 
 	\end{matrix}\right)=	  \left(\begin{matrix} 0 \\a_t \\ 1 \end{matrix}\right)=	  \left(\begin{matrix} 0 \\\lambda^*_t{\bf {1}}+b_t\sigma^*_t \\ 1 \end{matrix}\right) $$
 	with
 	$$\lambda^{**}_t=0 $$
 because the denominating savings account is a traded security,	yielding
 	\begin{equation} \label{sigma**}
 	\sigma^{**}_t=b^\top_t \pi^{**}_t=\lambda^*_tb^{-1}_t{\bf {1}}+\sigma^*_t, \qquad t \ge 0.\end{equation}
\end{proof}

\subsection{The simplified market setting}\label{sec:simplified}
For illustration we discuss throughout the paper also a {\em simplified market setting}, where  we assume in \eqref{e.4'} only one driving Brownian motion such that the stock GOP satisfies
	\begin{align}
		\label{e.4 1d}
	\frac{dS^{*}_t}{S^{*}_t}=\lambda^*_t \, dt + \theta_t\,  (  \theta_t \, dt+d B_t), \quad t \ge 0,
	\end{align}
with \begin{equation}
\label{thetasigma*}
\theta = ||\sigma^* ||,\end{equation}
where $B= ((\theta_t)^{-1} \sigma_t^*{}^\top W_t)_{t \ge 0}$ is a one-dimensional Brownian motion. %
  For illustration, in Section \ref{sec:MMM} the  stock GOP dynamics of this simplified  market setting will be specified.\\
 
We can extend this simplified market setting by adding the riskless savings account 
$S^0 \equiv 1$, which leads to the following conclusion: 
\begin{corollary}
	For the extended simplified market setting with the stock GOP $S^*$ satisfying \eqref{e.4 1d} with $\theta>0$ $dt \otimes dP$-a.s.~and the riskless savings account $S^0 \equiv 1$,   the unique  GOP $S^{**}$ satisfies
	\begin{align}\label{eq:GOP 1d}
		\frac{dS^{**}_t}{S^{**}_t} =  \sigma^{**}_t (\sigma_t^{**} dt +  dB_t) 
	\end{align}
	with
	\begin{align}\label{eq:sigma **2}
		\sigma_t^{**} =  \frac{\lambda_t^* }{\theta_t}+\theta_t.
	\end{align}
\end{corollary}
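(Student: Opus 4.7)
The plan is to specialize Proposition 2.2 to the one-dimensional setting where the single driving Brownian motion $B$ of the simplified market plays the role of the $n$-dimensional standard Brownian motion $W$ in the general result. All matrix- and vector-valued quantities appearing in Proposition 2.2 collapse to scalars, and the conclusion should drop out by direct substitution into \eqref{eq:sigma **'} and \eqref{eq:S**}.

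First I would read off the scalar data of the simplified market. Comparing \eqref{e.4 1d} with \eqref{e.4'} identifies $\sigma^*_t = \theta_t$, in agreement with \eqref{thetasigma*} in the one-dimensional case. Since there is a single risky primary security account, the diffusion coefficient $b_t$ in \eqref{eq: dS} is a scalar, and the normalization $\pi^*_t{}^\top \mathbf{1}=1$ from \eqref{sigma*t} forces $\pi^*_t=1$. The same relation \eqref{sigma*t} then yields $b_t = \sigma^*_t = \theta_t$, which by assumption is strictly positive $dt \otimes dP$-a.s.\ and hence nonsingular. This verifies the hypotheses needed to invoke Proposition 2.2 after extending the market by the riskless savings account $S^0 \equiv 1$.

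Then I would substitute these identifications into the formulas furnished by Proposition 2.2. Equation \eqref{eq:sigma **'} gives
\begin{equation*}
\sigma^{**}_t \;=\; \lambda^*_t\, b_t^{-1} + \sigma^*_t \;=\; \frac{\lambda^*_t}{\theta_t} + \theta_t,
\end{equation*}
which is exactly \eqref{eq:sigma **2}. The SDE \eqref{eq:S**} for $S^{**}$ then specializes, with $W$ replaced by $B$, to
\begin{equation*}
\frac{dS^{**}_t}{S^{**}_t} \;=\; \sigma^{**}_t\bigl(\sigma^{**}_t\, dt + dB_t\bigr),
\end{equation*}
giving \eqref{eq:GOP 1d}. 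Uniqueness of $S^{**}$ is inherited from Proposition 2.2.

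Since the argument is a direct specialization, there is no genuine obstacle; the only care needed is bookkeeping, namely checking that the scalar $b_t = \theta_t$ meets the integrability and invertibility conditions imposed in Proposition 2.2, which reduce in one dimension to $\theta_t > 0$ a.s.\ together with the local square-integrability implicit in \eqref{e.4 1d}.
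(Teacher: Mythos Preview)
Your proposal is correct and follows essentially the same approach as the paper: identify the scalar data $b_t=\theta_t=\|\sigma^*_t\|$ in the case $n=1$ and then invoke Proposition~\ref{prop:2.2}. The paper's proof is just a two-line version of what you spell out; your extra bookkeeping (deriving $b_t=\theta_t$ via $\pi^*_t=1$ and \eqref{sigma*t}, and checking invertibility from $\theta_t>0$) is fine but not strictly needed.
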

\begin{proof}
	In the case where $n=1$,
	 \eqref{e.4 1d} implies that $b_t = \theta_t=||\sigma^*_t||$ in Equation \eqref{eq: dS}.
	Then, the result follows by applying Proposition \ref{prop:2.2}.
\end{proof}

\subsection{Real-world pricing}
Pricing under the classical benchmark approach is called \emph{real-world pricing} because in contrast to risk-neutral pricing it utilizes expectations under the statistical or real-world probability measure $P$, which is done as follows: 
	Consider a bounded stopping time $T > 0$, and let $L^1_Q(\ccF_T)$
	denote the set of integrable $\ccF_T$-measurable random variables with respect to a probability measure $Q$.
A \emph{contingent claim} is equivalently described by  a payoff $H_T\geq 0$ with $\frac{H_T}{S^{**}_T}\in
	 L^1_P(\ccF_T)$. 

Following the notions defined in \cite{PlatenHe06}, %
we call a security price process $H$ \emph{fair} (under real-world pricing), if its benchmarked price process $\hat H = \frac{H}{S^{**}}$ is an $\bbF$-$P$-martingale. From this, we directly obtain the \emph{real-world pricing formula} 
\begin{align}\label{eq:real-world pricing}
	H_t=S^{**}_t \cdot E_P \left[\frac{H_T}{S^{**}_T} \, \Big|\, \ccF_t\right], \qquad 0 \le t \le T,
\end{align}
where $ E_P [\,  \cdot  \, |\, \ccF_t]$ denotes the conditional expectation with respect to $P$ under the information available at the time $t\in[0,T]$.	The real-world pricing formula uses the GOP $S^{**}$ of the extended market (consisting of the primary security accounts, such as stocks, together with the riskless savings account) as the num\'eraire and the real-world probability measure $P$ as the pricing measure. 
	
	When the contingent claim is replicable, it is  shown in \cite{DuPl16} that the fair price process coincides with the minimal possible self-financing hedge portfolio   that hedges its payoff.
	For the case when the contingent claim is not replicable,   \cite{BiaginiCrPl14} and \cite{DuPl16} proposed  risk-minimization methods that employ the GOP of the extended market $S^{**}$ as the num\'eraire and the real-world probability measure $P$ as the pricing measure.\\
		 For not replicable contingent claims, there exist   other   risk-minimization approaches in the literature, including the previously mentioned local risk-minimization approach; see \cite{FollmerSc91} and \cite{Moller98}. 
		  However,  the respective  pricing rules do never provide lower prices for nonnegative replicable contingent claims  in comparison to the real-world pricing formula because all $S^{**}$-denominated self-financing portfolios, which replicate a given nonnegative contingent claim, are $\bbF$-$P$-supermartingales and the $\bbF$-$P$-martingale is the least expensive one; see Lemma A.1 in \cite{DuPl16}.

\section{Benchmark-neutral pricing}
	 
Following 	\cite{Platen24}, we consider \emph{benchmark-neutral pricing} as a less expensive alternative to risk-neutral pricing and as a more robust alternative to real-world pricing. The key of this approach is to employ the stock GOP $S^*$ as a num\'eraire instead of the GOP $S^{**}$ of the extended market.

	 To obtain pricing formulae in the spirit of the classical risk-neutral pricing methodology, we  introduce the Radon-Nikodym derivative $\Lambda$ as the normalized ratio of the stock GOP $S^*$ over the GOP $S^{**}$ of the extended market %
	 $$  \Lambda := \frac{\frac{S^*}{S^*_0}}{\frac{S^{**}}{S^{**}_0}}. $$

\begin{lemma}\label{sigma*minus}
	The dynamics of the Radon-Nikodym derivative process $\Lambda$ is given by
	$$ \frac{d\Lambda_t}{\Lambda_t} = - \mu^*_t dW_t $$
	with $$ \mu^*_t=(\sigma^{**}_t - \sigma^*_t)^\top=\lambda^*_t{\bf {1}}^\top(b^{-1}_t)^\top , \qquad t \ge 0.$$
\end{lemma}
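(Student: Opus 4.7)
The natural approach is to compute the dynamics of $\Lambda$ directly from the known dynamics of $S^*$ (from \eqref{e.4'}) and $S^{**}$ (from \eqref{eq:S**}) via It\^o's formula. Since $\Lambda$ is a strictly positive ratio of two strictly positive processes, I would work with $\log\Lambda = \log S^* - \log S^{**} + \log(S^{**}_0/S^*_0)$, which makes the Brownian part immediate and leaves only a drift identity to check.

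More concretely, the first step is to apply It\^o's formula to each logarithm to obtain
\begin{align*}
d \log S^*_t &= \Bigl(\lambda^*_t + \tfrac{1}{2}\|\sigma^*_t\|^2\Bigr) dt + \sigma^*_t{}^\top dW_t, \\
d \log S^{**}_t &= \tfrac{1}{2}\|\sigma^{**}_t\|^2 \, dt + \sigma^{**}_t{}^\top dW_t.
\end{align*}
Subtracting yields $d \log \Lambda_t = \bigl(\lambda^*_t + \tfrac{1}{2}\|\sigma^*_t\|^2 - \tfrac{1}{2}\|\sigma^{**}_t\|^2\bigr) dt - (\sigma^{**}_t - \sigma^*_t)^\top dW_t$. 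Using the explicit expression $\sigma^{**}_t - \sigma^*_t = \lambda^*_t b_t^{-1} \mathbf{1}$ from \eqref{eq:sigma **'}, the diffusion coefficient is exactly $-\mu^*_t$ with $\mu^*_t = \lambda^*_t \mathbf{1}^\top (b_t^{-1})^\top$, matching the claim.

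The key algebraic step, and the only non-routine part, is showing that the drift vanishes after converting back from $\log \Lambda$ to $\Lambda$. Expanding
\[
\|\sigma^{**}_t\|^2 = \|\sigma^*_t\|^2 + 2 \lambda^*_t\, \sigma^*_t{}^\top b_t^{-1} \mathbf{1} + (\lambda^*_t)^2 \mathbf{1}^\top (b_t^{-1})^\top b_t^{-1} \mathbf{1},
\]
and invoking the identity $\sigma^*_t = b_t^\top \pi^*_t$ together with $\pi^*_t{}^\top \mathbf{1} = 1$ from \eqref{sigma*t}, one gets $\sigma^*_t{}^\top b_t^{-1}\mathbf{1} = \pi^*_t{}^\top \mathbf{1} = 1$. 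Therefore $\|\sigma^{**}_t\|^2 - \|\sigma^*_t\|^2 = 2\lambda^*_t + \|\mu^*_t\|^2$, and the drift of $\log\Lambda$ reduces to $-\tfrac{1}{2}\|\mu^*_t\|^2$.

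Finally, converting back via $\Lambda = \exp(\log \Lambda)$ and It\^o's formula adds the $+\tfrac{1}{2}\|\mu^*_t\|^2 dt$ correction, which cancels the drift exactly, yielding $d\Lambda_t/\Lambda_t = -\mu^*_t dW_t$. I expect the identification $\sigma^*{}^\top b^{-1}\mathbf{1} = 1$ to be the main (though short) obstacle, since it is the structural content that makes $\Lambda$ a local martingale; everything else is It\^o bookkeeping.
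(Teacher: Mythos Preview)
Your proof is correct and follows essentially the same approach as the paper: both compute the dynamics of $\Lambda$ via It\^o's formula and reduce the vanishing of the drift to the identity $\sigma_t^{*\top} b_t^{-1}\mathbf{1} = \pi_t^{*\top}\mathbf{1} = 1$ from \eqref{sigma*t}. The only cosmetic difference is that the paper applies It\^o directly to the ratio $S^*/S^{**}$, whereas you pass through $\log\Lambda$ and then exponentiate back; the key algebraic step is identical.
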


\begin{proof}
	Applying the It\^o formula and using the dynamics of $S^*$ in \eqref{e.4'} and the dynamics of $S^{**}$ in \eqref{eq:S**}, we obtain 
	\begin{align*}
		\frac{d\Lambda_t}{\Lambda_t} &= \Big( \lambda_t^* + \sigma_t^{*}{}^\top \sigma_t^* - \sigma_t^*{}^\top \cdot \sigma_t^{**} \Big) dt + (\sigma_t^* - \sigma_t^{**})^\top dW_t.
	\end{align*}
By Equation \eqref{sigma**}	it follows
\begin{equation}\label{sigma**-} \sigma_t^* - \sigma_t^{**}= -\lambda^*_tb^{-1}_t{\bf {1}},
	\end{equation}
	which yields with \eqref{sigma*t} the identity
	\begin{equation}
	\label{lambda*0}
	\lambda_t^* + \sigma_t^{*}{}^\top \sigma_t^* - \sigma_t^*{}^\top \cdot \sigma_t^{**} =\lambda_t^*(1 - \sigma_t^{*}{}^\top b^{-1}_t{\bf {1}})=\lambda_t^*(1 - \pi_t^{*}{}^\top b_t b^{-1}_t{\bf {1}})=\lambda_t^*(1 - \pi_t^{*}{}^\top {\bf {1}})=0 \end{equation}
	 resulting in
	\begin{align*}
	\frac{d\Lambda_t}{\Lambda_t} =  - \lambda^*_t(b^{-1}_t{\bf {1}})^\top dW_t.
\end{align*}
\end{proof}
\begin{remark}[Simplified market setting] \label{rem:3.3}
	For the simplified market setting, introduced in Section \ref{sec:simplified}, we can directly compute
	$$ \mu^*_t=\sigma_t^{**} - \sigma_t^{*} = \frac{\lambda_t^*}{\theta_t}, $$
	by Equation \eqref{sigma**-}, \eqref{thetasigma*}, and \eqref{eq:sigma **2};
	see also Equation (3.3) in \cite{Platen24}.	
\end{remark}
As can be seen, the Radon-Nikodym derivative $\Lambda$ is an $\bbF$-$P$-local martingale. This also follows from the fact that $\Lambda$ can be interpreted as a self-financing portfolio that is denominated in the GOP $S^{**}$ of the extended market.
 Under suitable dynamics of $ \Lambda$, the real-world pricing formula \eqref{eq:real-world pricing} indicates that $ \Lambda$ could be an $\bbF$-$P$-martingale. This leads us to the following assumption, which we will employ throughout the remainder of the paper:
 \begin{assumption}\label{ass}
 	The  Radon-Nikodym derivative process $ \Lambda$ is an  $\bbF$-$P$-martingale.
 \end{assumption}
 
 Under this assumption it is  possible to work  with a finite time horizon $T^*\in(0,\infty)$ and %
  proceed with a num\'eraire  change, as described in \cite{GemanKarouiRochet95}, by introducing the equivalent \emph{benchmark-neutral pricing measure} $Q^*$ satisfying
   $$dQ^* = \Lambda_{T^*} dP.$$ 
	\begin{remark}[Alternative assumption]
		We could alternatively assume that the Radon-Nikodym derivative process $ \Lambda$ is a uniformly integrable  $\bbF$-$P$-martingale, which is a slightly stronger assumption because it requires additionally  that $\Lambda$ satisfies  $\lim_{K\rightarrow \infty}\sup_{t \ge 0}E_P[|\Lambda_t|\ind{\Lambda_t> K}]=0$.  Hence the $P$-almost sure limit $\lim_{t\rightarrow \infty} \Lambda_t=\Lambda_\infty$ would be well-defined and one could    introduce the equivalent BN pricing measure $Q^*$ by setting
		\begin{align}
		dQ^* = \Lambda_\infty dP,
		\end{align} 
		which would allow  employing $Q^*$ over the full time interval $[0,\infty]$.
	\end{remark}

\subsection{The BN pricing formula}
Following  \cite{Platen24},  we  introduce the \emph{benchmark-neutral (BN) pricing formula}: 
Recall that $L^1_{Q^*}(\ccF_T)$ denotes the set of $\ccF_T$-measurable and $Q^*$-integrable random variables. Then, for  a contingent claim $H_T \ge 0$ with a bounded stopping time $T$ and $H_T (S_T^*)^{-1} \in L^1_{Q^*}(\ccF_T)$ 
we introduce the process $H^*$ by
\begin{align} 
		\label{BPF}H^*_t :=S^*_t \, E^* \left[\frac{H_{T}}{S^*_T} \, \Big| \,\ccF_t\right], \qquad 0 \le t \le T,  
\end{align}
Where $E^* [ \, \cdot \, | \,\ccF_t]$ denotes the conditional expectation with respect to $Q^*$ under the information available at time $t\in[0,T]$. Adding to the extended market the derivative with price process $H^*$ does not change the GOP $S^{**}$ of the extended market formed by the $n$ primary security accounts and the  savings account, as should be expected.  Key properties of BN pricing are summarized in the following theorem: 
\begin{theorem}\label{thm:BNpricing}
Under Assumption \ref{ass}, the following holds:
\begin{enumerate}[(i)]
	\item Let $H_T (S_T^*)^{-1}\in L^1_{Q^*} (\ccF_T)$ and $H_T (S_T^{**})^{-1}\in L^1_P(\ccF_T)$ with bounded stopping time $T\le T^*$. Then, its fair price $H$ satisfies
	\begin{align}\label{eq:H H*}
		H = H^*.
	\end{align} 
	\item The process 
	 \begin{equation}\label{eq:W*}
	 W^*= \int_0^\cdot  \mu^*_s \, ds +W
	 \end{equation}
	 with $\mu^*=(\sigma^{**} - \sigma^*)^\top$ is a $Q^*$-Brownian motion and 
	 \begin{align}\label{eq:stock GOP dynamicsy}
	 	\frac{dS^{*}_t}{S^{*}_t} &=\sigma^*_t{}^\top \cdot ( \sigma^*_t \, dt  + dW^*_t), \qquad t \ge 0.
	 \end{align}
	 \item $X$ is an $\bbF$-$P$-local martingale if and only if $X\Lambda^{-1}$ is an $\bbF$-$Q^*$-local martingale.
	 \item With $S^*$ as num\'eraire, $Q^*$ is an equivalent local martingale measure for $S=(S^0,S^1,\dots,S^d)^\top$. 
\end{enumerate}
\end{theorem}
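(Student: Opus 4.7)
The plan is to treat the four claims in order, with (iii) serving as the workhorse for (i) and (iv), and (ii) following from a direct Girsanov application to the dynamics of $\Lambda$ established in Lemma~\ref{sigma*minus}.

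\smallskip

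For part (i), the idea is to translate the $Q^*$-conditional expectation in \eqref{BPF} into a $P$-conditional expectation via the abstract Bayes rule, using the density $\Lambda$. Since $\Lambda_t = (S^*_t/S^*_0)(S^{**}_0/S^{**}_t)$, the ratio $S^*_t/\Lambda_t$ is a deterministic multiple of $S^{**}_t$, and $\Lambda_T/S^*_T$ is a deterministic multiple of $1/S^{**}_T$. Multiplying and collecting these constants, the BN pricing formula rewrites exactly as the real-world pricing formula \eqref{eq:real-world pricing}; hence $H^*_t=H_t$. The integrability assumption $H_T/S_T^{**}\in L^1_P(\ccF_T)$ is what makes the Bayes rule applicable.

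\smallskip

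For part (ii), I would invoke the Girsanov theorem: since Lemma~\ref{sigma*minus} gives $d\Lambda_t/\Lambda_t = -\mu^*_t\,dW_t$ and $\Lambda$ is a true $P$-martingale by Assumption~\ref{ass}, the process $W^*$ defined in \eqref{eq:W*} is a $Q^*$-Brownian motion. Then I substitute $dW_t = dW^*_t - \mu^*_t\,dt$ into the $P$-dynamics \eqref{e.4'} of $S^*$. Expanding the drift yields
\[
\frac{dS^*_t}{S^*_t}=\bigl(\lambda^*_t+\sigma^*_t{}^\top\sigma^*_t-\sigma^*_t{}^\top(\sigma^{**}_t-\sigma^*_t)\bigr)dt+\sigma^*_t{}^\top dW^*_t,
\]
and the identity \eqref{lambda*0} (equivalently $\lambda^*_t-\sigma^*_t{}^\top(\sigma^{**}_t-\sigma^*_t)=\sigma^*_t{}^\top\sigma^*_t-\sigma^*_t{}^\top\sigma^*_t$ after cancellation) collapses the drift precisely to $\sigma^*_t{}^\top\sigma^*_t$, giving \eqref{eq:stock GOP dynamicsy}. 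The only subtle point here is keeping the row/column conventions for $\mu^*$ consistent; this is the main bookkeeping obstacle.

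\smallskip

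Part (iii) is the standard Bayes rule for (local) martingales under an equivalent change of measure: for $Q^*\sim P$ with density process $\Lambda$, a process $Y$ is an $\bbF$-$Q^*$-local martingale iff $Y\Lambda$ is an $\bbF$-$P$-local martingale. Applying this with $Y=X\Lambda^{-1}$ gives the claimed equivalence. I would simply localize along a reducing sequence of stopping times and quote the measure-change formula for conditional expectations.

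\smallskip

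For part (iv), I would apply (iii) to $Y=S/S^*$: this process is a $Q^*$-local martingale iff $Y\Lambda$ is a $P$-local martingale. But
\[
Y_t\Lambda_t=\frac{S_t}{S^*_t}\cdot\frac{S^*_t/S^*_0}{S^{**}_t/S^{**}_0}=\frac{S^{**}_0}{S^*_0}\cdot\frac{S_t}{S^{**}_t},
\]
and each component of $S/S^{**}$ is an $\bbF$-$P$-local martingale by construction of the classical GOP (this is the benchmark property of $S^{**}$ for the extended market, recalled before equation~\eqref{eq:X} and in Proposition~\ref{prop:2.2}). Hence $S/S^*$ is a $Q^*$-local martingale, i.e., $Q^*$ is an equivalent local martingale measure for $S$ with num\'eraire $S^*$. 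No further estimate is required beyond Assumption~\ref{ass}, which ensures $Q^*$ is a genuine probability measure equivalent to $P$.
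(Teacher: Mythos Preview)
Your proposal is correct and follows essentially the same route as the paper: Bayes' rule for (i), Girsanov plus the drift identity \eqref{lambda*0} for (ii), the standard density--local-martingale equivalence for (iii), and then (iii) applied to $S/S^*$ via $S/S^{**}$ for (iv). The only cosmetic difference is that the paper proves (iii) by an explicit It\^o computation on $X\Lambda^{-1}$ for continuous $X$ driven by $W$, whereas you invoke the abstract Bayes rule for local martingales directly; both are standard and yield the same conclusion.
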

\begin{proof}
	The first result follows as in \cite{GemanKarouiRochet95} by an application of Bayes' rule. Indeed:
	\begin{align}
		\frac{H^*_t}{S_t^*} = \,  E^* \left[\frac{H_{T}}{S^*_T} \, \Big| \,\ccF_t\right] = \, E \left[\frac{\Lambda_T}{\Lambda_t} \, \frac{H_{T}}{S^*_T} \, \Big| \,\ccF_t\right] = 
		 \, E \left[\frac{S_t^{**}}{S_t^*} \, \frac{H_{T}}{S_T^{**}} \, \Big| \,\ccF_t\right] = 
		 \frac{S_t^{**}}{S_t^*} \, \frac{H_t}{S_t^{**}},
	\end{align}
	and, since $S^*>0$ P-a.s, we obtain $H=H^*$. Next, by the  Girsanov theorem (see e.g.~Theorem III.3.11 in \cite{JacodShiryaev}),
	\begin{align}
		W^* %
		 = W + \int_0^\cdot \mu^* ds 
	\end{align} 
	is a $Q^*$-Brownian motion. 
For (ii), recall from Equation \eqref{e.4'} that
	\begin{align*}
	\frac{dS^{*}_t}{S^{*}_t} &=\lambda^*_t \, dt + \sigma^*_t{}^\top \cdot   (  \sigma^*_t \, dt+d W_t) \\
	&= \lambda^*_t+ \sigma^*_t{}^\top \cdot      \sigma^*_t - \sigma^*_t{}^\top \cdot \mu^*_t \, dt + \sigma^*_t{}^\top \cdot dW^*_t \\
	&= \lambda^*_t  + \sigma^*_t{}^\top \cdot ( \sigma^*_t - \sigma^{**}_t) + \sigma^*_t {}^\top \cdot \sigma^*_t dt  
	+ \sigma^*_t{}^\top \cdot dW^*_t \\
	&=  \sigma^*_t{}^\top \cdot ( \sigma^*_t  + dW^*_t),
	\end{align*}  
where for the last equality we used Equation \eqref{eq:sigma **'} as in \eqref{lambda*0}.

The third result follows  by the classical change of num\'eraire technique; see \cite{GemanKarouiRochet95}. For the convenience of the reader we provide the short proof: consider the non-negative $\bbF$-$P$-local martingale $X$ such that $dX = \sigma^\top  dW$ and 
note that by the It\^o formula
$$ \frac{d \Lambda^{-1}_t}{\Lambda^{-1}_t} = \mu^*_t dW_t + \mu^*_t{}^\top \mu^*_t dt,
$$
such that 
\begin{align*}
	dX_t \Lambda^{-1}_t &= (\sigma_t^\top  + \Lambda^{-1}_t\mu^*_t) \cdot dW_t + (\Lambda^{-1}_t\mu_t^*{}^\top + \sigma_t^\top) \cdot \mu_t^* \, dt   \\
	&= (\sigma_t^\top  + \Lambda^{-1}_t\mu^*_t) dW_t^*
\end{align*}
and hence, $X$ is an $\bbF$-$Q^*$-local martingale. The reverse direction follows in an analogous manner.

Regarding (iv), we simply note that since $S^{**}$ is a local martingale deflator for $S$, $S^*$ is a local martingale deflator for $S$ under $Q^*$ by (iii). 
\end{proof}
The above theorem shows two important properties: First, fair pricing, as in Equation \eqref{eq:real-world pricing}, and BN pricing, as in Equation \eqref{BPF}, coincide. Second, $Q^*$ is an equivalent local martingale measure for prices discounted in terms of the stock GOP $S^*$. As already mentioned, this implies that pricing and hedging under the BN methodology is, in practice, highly feasible. Note that, in general, the change to the savings account as the num\'eraire may not lead to an equivalent local martingale measure  such that classical risk-neutral pricing (as we will later  show in an example) would imply  strictly higher prices of securities than necessary; see Section \ref{sec:MMM} for a detailed example. In particular, for  long-term products in insurance and for pensions this fact plays a significant role. 
We refer to \cite{Platen24}  for further details on how to implement BN pricing and hedging for replicable contingent claims.

 A straightforward application of the above theorem is that the risk-less savings account $S^0$ when denominated with the stock GOP $S^*$, denoted by $X^0 = \nicefrac{S^0}{S^*}$, satisfies 
		\begin{align} 
		\label{e.4'''}
		dX^{0}_t=-X^0_t \, 
		 \sigma^*_t{}^\top  \cdot d W^{*}_t, \qquad t \ge 0,  
		\end{align} 
		with $X^0_0>0$, where $W^*$ is a $Q^*$-Brownian motion.

\section{Benchmark-neutral risk minimization}
\label{sec:Benchmark-neutral risk minimization}

The current paper  aims to identify the least expensive  method of delivering a targeted, potentially not fully replicable, contingent claim through hedging  while minimizing the fluctuations
 in the stock GOP-denominated hedging error under the BN pricing measure $Q^*$. 
For this aim, we first need to introduce the associated hedging strategies. 
If we consider the stochastic integral with respect to the continuous $\bbF$-$Q^*$-local martingale $X$, we denote by $L_{\loc,Q^*}^2(X)$ the class of progressively measurable locally-square integrable processes $\zeta$ such that 
$$ Q^*\Big( \int_0^t \zeta_s^i \zeta_s^j d \langle X^i, X^j \rangle_s < \infty\Big) = 1, 
$$ for all $t > 0$ and $0 \le i,j \le n$.

\subsection*{Admissible trading strategies}
The market participants can combine primary security accounts to form portfolios. 
As previously, denote by  
$
	X=\nicefrac{S}{S^*}	
$
  the assets denominated in units of the stock GOP $S^*$.

A trading strategy
$\nu=(\eta,\zeta)$ consists of an $\mathcal{F}_0$-measurable initial value $\zeta^\top_0 X_0$, an adapted process $\eta$,  and the $(n+1)$-dimensional process $\zeta \in L^2_{\loc}(X,Q^*)$. Here, $\zeta^j_t$, $j\in\{0,1,...,n\}$, represents the number of units of the $j$-th primary security account that are held at time $t \ge 0$ and $ X^\zeta = \zeta^\top \cdot X$ denotes the value of the portfolio. The portfolio trading strategy $\zeta$ is called \emph{self-financing}, if  
\begin{equation}\label{stochint}
X^\zeta=\zeta_0^\top X_0+\int_{0}^{\cdot }\zeta_t^\top \cdot dX_t.
\end{equation}
Intuitively, self-financing means that no  funds flow in or out of the portfolio. 

We call the trading strategy
$\nu=(\eta,\zeta)$ \emph{admissible}, if $\zeta$ is self-financing and 
\begin{align}\label{def:admissible-supermartingale}
	V^\nu = \zeta^\top \cdot X + \eta
\end{align} 
is a $Q^*$-supermartingale. We will  call also $\zeta$  a trading strategy by associating it with $\nu=(0,\zeta)$. 

Recall that by Theorem \ref{thm:BNpricing}, $X$ is an $\bbF$-$Q^*$-local martingale, and since $\zeta \in L^2_{\loc,Q^*}(X)$, then so is $X^\zeta$. 
If $\eta$ is predictable and of finite variation, then $V^\nu$ is a special semimartingale. When $\nu$ is admissible,  it is also an $\bbF$-$Q^*$-supermartingale, and hence $\eta$ is non-increasing. If $\eta$ is not of finite variation or only adapted but a special semimartingale, it can be decomposed into an $\bbF$-$Q^*$-local-martingale part, representing unhedgeable risks, and a non-increasing part of finite variation. We will derive useful properties of $\eta$ in Section \ref{sec:hedging}.

With this class of admissible dynamical trading strategies one can form a wide range of portfolios that include many not self-financing portfolios. 
The  admissible trading strategy $\nu$ employs
the stock GOP $S^*$ as the num\'eraire and monitors the inflow and outflow of extra capital by $\eta$ (again in units of the stock GOP  $S^*$).
For a replicable contingent claim, BN-hedging can be performed with a self-financing portfolio  as described in \cite{Platen24}. The current paper generalizes these results to the case of not replicable contingent claims.

We note that by Proposition \ref{prop:2.2}, see also Theorem 3.1 in \cite{FilipovicPl09}, the stock GOP $S^*$ is given by a self-financing trading strategy $\zeta^*$, such that $\zeta^*{}^\top S = S^*$ (and hence $\zeta^*{}^\top X=1$). 
When allowing
extra capital inflow or outflow by an admissible dynamic trading strategy, one obtains directly  the following
result:
\begin{corollary}\label{dynportf}
Consider an admissible dynamic trading strategy $\nu=(\eta,\zeta)$ and assume that $\eta \zeta^*{}^\top \in L^2_{\loc,Q^*}(X)$. Then, 
	 \begin{equation}\label{tildeV'}
	  V^\nu_t=\delta_t^\top X_t %
	 \end{equation}
	where $\delta \in L^2_{\loc,Q^*}(X)$ is given by 
	\begin{equation}
	\delta = \zeta +\eta \zeta^*{}^\top.
	\end{equation}
	\end{corollary}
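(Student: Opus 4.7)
The plan is to exploit the defining property of the stock GOP trading strategy $\zeta^*$ that the corollary explicitly highlights just above its statement, namely $\zeta^*{}^\top S = S^*$. Dividing through by $S^*$ (which is strictly positive by construction) yields the one-line identity
$$
\zeta^*{}^\top X \equiv 1,
$$
and this is essentially the entire content of the argument.

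First, I would use this identity to rewrite the cash position $\eta$ as $\eta = \eta \cdot \zeta^*{}^\top X$. Substituting into the defining formula $V^\nu = \zeta^\top X + \eta$ of \eqref{def:admissible-supermartingale} gives
$$
V^\nu_t = \zeta_t^\top X_t + \eta_t \zeta^*_t{}^\top X_t = \bigl(\zeta_t + \eta_t \zeta^*_t\bigr)^\top X_t = \delta_t^\top X_t,
$$
which is precisely \eqref{tildeV'}. Up to this point the argument is purely algebraic and uses only the GOP representation from Proposition \ref{prop:2.2}.

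Second, I would verify the integrability claim $\delta \in L^2_{\loc,Q^*}(X)$. Admissibility of $\nu$ includes by definition that $\zeta \in L^2_{\loc,Q^*}(X)$, and the standing hypothesis of the corollary states $\eta \zeta^*{}^\top \in L^2_{\loc,Q^*}(X)$. Since $L^2_{\loc,Q^*}(X)$ is closed under pointwise addition of processes, the sum $\delta = \zeta + \eta \zeta^*$ again lies in this class.

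There is no real obstacle here; the result is a direct bookkeeping consequence of the fact that the stock GOP, when denominated by itself, has unit value. The only point that requires a modicum of care is dimensional: $\zeta^*{}^\top X$ must be read as a scalar (the sum of holdings against the $S^*$-denominated assets), whereas $\eta \zeta^*$ is a vector that is added to $\zeta$. With those conventions in place, the proof reduces to the two displays above.
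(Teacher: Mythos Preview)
Your proposal is correct and matches the paper's approach: the paper does not give an explicit proof but simply states that the result follows directly from the identity $\zeta^*{}^\top X = 1$ (noted just above the corollary), which is precisely the one-line substitution you carry out. Your additional remarks on the integrability of $\delta$ and on the vector/scalar conventions make explicit what the paper leaves implicit.
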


 We introduce the following notion to capture the  payoff of a contingent claim. Recall that a contingent claim is equivalently described by its $\ccF_T$-measurable payoff $H_T \ge 0$ at the bounded stopping time $T$. 
 \begin{definition} \label{deliver}
 	An admissible trading strategy $\nu$ is said to {\em deliver} the  contingent claim $H_T$, if  
 	\begin{equation}
 	 V^\nu_T= \frac{H_T}{S_T^*} \qquad Q^*\text{-a.s.}
 	\end{equation}
 	 \end{definition}
 The contingent claim is \emph{replicable} if a self-financing trading strategy $\nu=(0,\zeta)$ exists such that $\zeta$ delivers the contingent claim.

\subsection{Hedging}\label{sec:hedging}
 	
 	  If one aims to deliver a not fully replicable contingent claim, one typically faces  a fluctuating stock GOP-denominated profit and loss %
 	   process and, thus, an intrinsic risk that the current paper aims to minimize. %
  	  	There exist many ways of hedging a not replicable contingent claim. One of these approaches is the  concept of local risk-minimization %
  	 by \cite{FollmerSc91}.  This approach employs the riskless savings account $S^0$ as the num\'eraire  and an assumed putative minimal equivalent martingale measure as the pricing measure. As follows from \cite{Platen24},  such an equivalent minimal martingale measure is unlikely to exist for realistic long-term stock GOP models and, as a consequence, the resulting prices of many long-term contingent claims become in this case significantly higher  than necessary; see also Section \ref{sec:putative RN measure}. 
  	 
  	Alternatively,  \cite{BiaginiCrPl14} and \cite{DuPl16} have used the GOP $S^{**}$ of the extended market as the num\'eraire and the real-world probability measure $P$ as the pricing measure under the benchmark approach. The respective risk-minimization methods provide the minimal possible prices for targeted contingent claims. As already mentioned, these methods face difficulties in practical implementations through the fact that the GOP of the extended market  $S^{**}$ is a highly leveraged portfolio.
  	  The concept of BN-pricing and the associated risk-minimization avoids this problem by using the stock GOP $S^*$ as a num\'eraire.  
  To introduce  BN risk-minimization, we follow \cite{Schweizer91}, but use $S^*$ as the num\'eraire. \\
  	 
  An  admissible  trading strategy $\nu=(\eta,\zeta)$  is called \emph{$Q^*$-mean self-financing} if $\eta$ is an  $\bbF$-$Q^*$-local martingale. 
  If $\eta$ is orthogonal to $X$ under $Q^*$, i.e.\ if $\eta \, X$ is an  $\bbF$-$Q^*$-local martingale, then we say that $\eta$ is a  {\em $Q^*$-orthogonal}  monitoring process. 
    We combine these two desirable  properties as follows:
    
     Consider a contingent claim $H_T \ge 0$ such that $\nicefrac{H_T}{S^*_T} \in L^1_{Q^*}(\ccF_T)$. Then we denote by
  $\cV(H_T) $  
  the set of admissible trading strategies which deliver $H_T$, are $Q^*$-mean self-financing, and have $Q^*$-orthogonal  monitoring processes.
 There may exist several such trading strategies for a contingent claim. In \cite{DuPl16}, the notion of a benchmarked risk-minimization strategy has been introduced that offers the cheapest value process in the sense of Equation \eqref{eq:V minimal} below. Following this idea, we introduce in the following definition the \emph{benchmark-neutral risk-minimizing strategy (BNRM)} that meets the following four desirable properties: first, it delivers the payoff; second, it is $Q^*$-mean self-financing; third, the monitoring process is $Q^*$-orthogonal; and, finally, it has a minimal value process in the sense of Equation \eqref{eq:V minimal} below.

   \begin{definition}\label{BPRISKMIN}
   	For  a contingent claim $H_T \ge 0$ with the bounded stopping time $T>0$ as maturity    and $\nicefrac{H_T}{S^*_T} \in L^1_{Q^*}(\ccF_T)$, a trading strategy $\nu \in \cV(H_T)$ is called a {\em BNRM strategy}, if 
    \begin{equation}\label{eq:V minimal}
   		 V^\nu_t\leq V^{\tilde \nu}_t, 
   	\end{equation}	
   	$Q^*$-almost surely	for all $t\in[0,T]$ and $\tilde \nu \in \cV(H_T)$.  
   	\end{definition}

   The following result shows that   among the  $\bbF$-$Q^*$-supermartingales contained in $ \cV(H_T)$ the respective  $\bbF$-$Q^*$-martingale yields the
   minimal possible price process, showing that the BN pricing formula in Equation \eqref{BPF} yields the cheapest strategy.

\begin{proposition}
	\label{BPprice}
Consider  a contingent claim $H_T \ge 0$ with the bounded stopping time $T>0$ as maturity    and assume that $\nicefrac{H_T}{S^*_T} \in L^1_{Q^*}(\ccF_T)$. If the fair price $H^*$ given by the BN pricing formula \eqref{BPF} can be realized by a trading strategy $\nu \in \cV(H_T)$, i.e.\ $H^*=V^\nu$, then $\nu$ is a BNRM strategy. 
\end{proposition}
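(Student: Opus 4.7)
The plan is a straightforward supermartingale-versus-martingale comparison: $V^\nu$ is forced to be an $\bbF$-$Q^*$-martingale because it coincides with the conditional expectation produced by the BN pricing formula, whereas every admissible competitor in $\cV(H_T)$ is only an $\bbF$-$Q^*$-supermartingale. Both processes have the same $\ccF_T$-terminal value $H_T/S^*_T$, so a single application of the supermartingale inequality at the bounded stopping time $T$ pins the competitor pointwise above $V^\nu$ on all of $[0,T]$.

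Concretely, I would first invoke formula \eqref{BPF} together with the integrability assumption $H_T/S_T^* \in L^1_{Q^*}(\ccF_T)$ to observe that the tower property makes
\begin{equation*}
V^\nu_t \,=\, \frac{H^*_t}{S^*_t} \,=\, E^*\!\left[\frac{H_T}{S^*_T}\,\Big|\,\ccF_t\right], \qquad 0 \le t \le T,
\end{equation*}
an $\bbF$-$Q^*$-martingale. For an arbitrary $\tilde\nu \in \cV(H_T)$, admissibility (cf.\ \eqref{def:admissible-supermartingale}) gives the $\bbF$-$Q^*$-supermartingale property of $V^{\tilde\nu}$, while the delivery condition of Definition \ref{deliver} forces $V^{\tilde\nu}_T = H_T/S^*_T$ $Q^*$-a.s. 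Combining these two facts via optional sampling at the bounded stopping time $T$ yields
\begin{equation*}
V^{\tilde\nu}_t \,\ge\, E^*\!\left[V^{\tilde\nu}_T\,\big|\,\ccF_t\right] \,=\, E^*\!\left[\frac{H_T}{S^*_T}\,\Big|\,\ccF_t\right] \,=\, V^\nu_t,
\end{equation*}
which is precisely the minimality condition \eqref{eq:V minimal} of Definition \ref{BPRISKMIN}, so $\nu$ is a BNRM strategy.

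There is no real obstacle here: the $Q^*$-mean self-financing and $Q^*$-orthogonality properties that define membership in $\cV(H_T)$ play no active role in the comparison itself; they enter only through the assumption that both $\nu$ and $\tilde\nu$ lie in $\cV(H_T)$. The only point requiring minor care is that $T$ is a bounded stopping time, so that the $\bbF$-$Q^*$-supermartingale $V^{\tilde\nu}$ can legitimately be sampled at $T$ and the martingale $V^\nu$ closes at the correct terminal value. Conceptually this is the BN analogue of Lemma A.1 in \cite{DuPl16}: within the admissible class delivering a prescribed nonnegative payoff, the conditional-expectation process is pointwise minimal.
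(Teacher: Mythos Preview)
Your proof is correct and follows essentially the same approach as the paper's own proof, which explicitly cites Lemma~A.1 in \cite{DuPl16}: both compare the $\bbF$-$Q^*$-martingale $V^\nu$ against an arbitrary $\bbF$-$Q^*$-supermartingale $V^{\tilde\nu}$ sharing the same terminal value $H_T/S_T^*$, and conclude minimality from the supermartingale inequality. Your write-up is slightly more explicit about invoking optional sampling at the bounded stopping time $T$, but the argument is identical in substance.
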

\begin{proof}
For the convenience of the reader we repeat the short proof of Lemma A.1 in \cite{DuPl16}. 
	Consider the two non-negative  $\bbF$-$Q^*$-supermartingales $V^\nu$ and $V^{\tilde \nu}$ that satisfy
	$$ V_T^\nu = \frac{H_T}{S_T^*} = V_T^{\tilde \nu}. $$
	Since $V^\nu$ is an  $\bbF$-$Q^*$-martingale, $V_t^{\nu}=E^*[V_T^\nu|\ccF_t]$. Moreover, since $V^{\tilde \nu}$ is an  $\bbF$-$Q^*$-supermartingale we already obtain
	$$ V_t^{\nu}=E^*[V_T^\nu|\ccF_t]= E^*[V_T^{\tilde \nu}|\ccF_t] \le V_t^{\tilde \nu}$$
	$Q^*$-a.s.\ for all $0 \le t \le T$. Hence $\nu$ is the cheapest strategy and the claim follows.
\end{proof}

\subsection{Martingale representation} 
Up to now the results on the associated martingales were quite general. To obtain a more precise comparison it will be  useful to represent these martingales in terms of the driving processes. This extends  the F\"ollmer-Schweizer decomposition, as described in \cite{Schweizer95a}, to the BN setting. 
We introduce the  decomposition  in the following form:

\begin{definition}\label{HT}
	 The contingent claim $H_T$  is called {\em regular}  if it can be represented via a {\em BN decomposition}
	\begin{equation}\label{tildeHT}
\frac{	H_T}{S^*_T}=\frac{H_t^*}{S^*_t} +\int_{t}^{T}\zeta_s^\top \cdot dX_s +\eta_T - \eta_t
	\end{equation}
	where $\zeta \in L^2_{\loc,Q^*}(X)$, $\eta$ is a locally square-integrable  $\bbF$-$Q^*$-local martingale,  $\eta_0=0$ and $\eta$ is $Q^*$-orthogonal to $X$.
\end{definition}
Note the similarity to an admissible trading strategy that is mean-self financing and having a $Q^*$-orthogonal monitoring process. Intuitively, the above BN decomposition refers to the heuristic notion of the hedgeable part given by the stochastic integral and the completely nonhedgeable part represented by $\eta$. 
 Combining   Definition \ref{BPRISKMIN}, Corollary \ref{BPprice}, and the decomposition in Definition \ref{HT}   leads us directly to the following conclusion:
\begin{corollary}\label{cor:BNRM}
Consider  a regular contingent claim $H_T \ge 0$ with the bounded stopping time $T>0$ as maturity    and assume that $\nicefrac{H_T}{S^*_T} \in L^1_{Q^*}(\ccF_T)$. Then there exists a BNRM-strategy $\nu=(\eta,\zeta)$ that delivers $H_T$ and realizes the fair price process $H^*$ given by the BN pricing formula \eqref{BPF}. Here, $\zeta$ is given by Equation \eqref{tildeHT} and the monitoring process $\eta$ satisfies
\begin{align}\label{eq:eta}
	\eta = V^\nu - \int_0^\cdot \zeta_s^\top \cdot dX_s - V_0^\nu.
\end{align} 
\end{corollary}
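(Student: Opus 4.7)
The plan is to read off $\nu$ directly from the BN decomposition in Definition \ref{HT} and then invoke Proposition \ref{BPprice}. The key observation is that the decomposition \eqref{tildeHT}, written both at time $0$ (with $\eta_0=0$) and at a generic $t \in [0,T]$, has the same left-hand side $H_T/S_T^*$. Subtracting the two copies yields the pathwise identity
$$\frac{H_t^*}{S_t^*} = \frac{H_0^*}{S_0^*} + \int_0^t \zeta_s^\top \cdot dX_s + \eta_t, \qquad t \in [0,T].$$
This identity is the backbone of the construction: it exhibits the BN price process $H^*/S^*$ as an initial capital, plus a stochastic integral against the $Q^*$-local martingale $X$, plus the orthogonal $Q^*$-local martingale $\eta$ produced by the decomposition.

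I would then take $\zeta$ and $\eta$ directly from the BN decomposition, fix the initial capital $\zeta_0^\top X_0 := H_0^*/S_0^*$, and form $\nu = (\eta, \zeta)$ with value process $V^\nu = \zeta^\top \cdot X + \eta$. The identity above gives $V^\nu_t = H_t^*/S_t^*$ for all $t$, and in particular $V^\nu_T = H_T/S_T^*$, so $\nu$ delivers $H_T$ in the sense of Definition \ref{deliver}. Definition \ref{HT} already forces $\eta$ to be a locally square-integrable $Q^*$-local martingale (hence $\nu$ is $Q^*$-mean self-financing) and $Q^*$-orthogonal to $X$. For admissibility, the BN pricing formula \eqref{BPF} combined with the hypothesis $H_T/S_T^* \in L^1_{Q^*}(\ccF_T)$ makes $V^\nu = H^*/S^*$ a genuine $\bbF$-$Q^*$-martingale, hence a supermartingale. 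Therefore $\nu \in \cV(H_T)$, and because $V^\nu$ coincides with the fair price, Proposition \ref{BPprice} concludes that $\nu$ is a BNRM strategy. Equation \eqref{eq:eta} is then obtained by rearranging $V^\nu = \zeta^\top \cdot X + \eta$ and using the self-financing relation $\zeta^\top \cdot X = \zeta_0^\top X_0 + \int_0^\cdot \zeta_s^\top \cdot dX_s$.

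The main subtlety I anticipate is treating the $\zeta$ supplied by the BN decomposition (an $(n+1)$-dimensional integrand) as the vector of holdings of a genuinely self-financing portfolio with prescribed initial capital $H_0^*/S_0^*$. Since $X^0 = 1/S^*$ is tied to the stock-GOP strategy $\zeta^*$ through Proposition \ref{prop:2.2}, there is one component free to absorb the self-financing constraint, so the required holdings can indeed be realised. A cleaner way to bypass this bookkeeping is to \emph{declare} $V^\nu := H^*/S^*$ and to define $\eta$ via Equation \eqref{eq:eta}; the pathwise identity above then forces the resulting $\eta$ to coincide with the $Q^*$-local martingale orthogonal to $X$ appearing in the decomposition, and all four BNRM properties are inherited directly. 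Integrability is not an issue, as $H_T/S_T^* \in L^1_{Q^*}(\ccF_T)$ already makes $V^\nu$ a true $Q^*$-martingale.
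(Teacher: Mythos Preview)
Your proposal is correct and follows precisely the route the paper itself indicates: the paper does not give a detailed proof but simply states that combining Definition~\ref{BPRISKMIN}, Proposition~\ref{BPprice}, and the BN decomposition in Definition~\ref{HT} leads directly to the corollary. Your argument spells out exactly this combination, deriving the pathwise identity for $H^*/S^*$ from the decomposition, verifying membership in $\cV(H_T)$, and invoking Proposition~\ref{BPprice} to conclude minimality; the additional remarks on the self-financing bookkeeping are a welcome clarification of what the paper leaves implicit.
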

Note that the BN decomposition in Equation \eqref{tildeHT} can be interpreted as a Kunita-Watanabe decomposition under $Q^*$. It  follows from \cite{AnselSt93} that when the  $\bbF$-$Q^*$-local martingale $X$ is  continuous,  all  contingent claims with $\nicefrac{H_T}{S^*_T} \in L^1_{Q^*}(\ccF_T)$ are regular. 
For a given regular contingent claim $H_T$, one can compute the BN decomposition  by first  calculating 
$$
H_t^*= S_t^* \, E^*\bigg[ \frac{H_T}{S_T^{*}} | \ccF_t \bigg] , $$ 
either by
explicit calculations or via  numerical methods. 
The position $\zeta$  in the self-financing part of the portfolio can  be obtained by application of the  It\^{o} formula. 

\subsection{An illustrative example}(Simplified market setting)
Let us provide an illustration of the above result for the previously introduced simplified market setting, which relates to Proposition 7.1 in \cite{DuPl16}. Our aim is to hedge a contingent claim and identify the hedgeable and nonhedgeable parts. To this end, consider a contingent claim $H_T$ and recall that $W^*$ given in Equation \eqref{eq:W*} is in the simplified market setting a one-dimensional $Q^*$-Brownian motion. 
We %
add an independent Brownian motion $W'$ to our setting, which is $Q^*$- orthogonal
   to $X$. Since  $W'$ does not drive the Radon-Nikodym density $\Lambda$, there is no measure change arising for $W'$. This means, $W'$ is a  Brownian motion under both $P$ and  $Q^*$. 

   In the simplified market setting $X$ represents with its two components the two primary security accounts $X^0$ and $X^1$ that are denominated in the stock GOP $S^*$ and driven by $W^*$: see Equation \eqref{eq:X} and Equation \eqref{e.4'''}. Assume that $X^1= 1$  and  
\begin{align}\label{ass:dX}
	dX^0_t =- X^0_t \,  \theta_t  dW^*_t, \qquad t \ge 0,
\end{align} 
with $\theta \in L^2_{\loc,Q^*}(W^*)$. 
We also assume that the representation is not redundant in the sense that $\theta_t  \neq 0$  for all $t \ge 0$. 

The contingent claim $H_T$ can  depend directly on $W'$ (for example through a dependence on the volatility $\theta$) and we assume that its fair price price process $H^*$, denominated in terms of the stock GOP (see Equation \eqref{BPF}), has the BN decomposition
	\begin{equation}\label{tildeHT''}
	\frac{H_t^*}{S_t^*} =\frac{ H_0^*}{S_0^*}+\int_0^t \varphi^*_s \cdot dW^*_s +\int_{0}^{t} \varphi'_s \cdot dW'_s,
	\end{equation}
	where $\varphi^* \in L^2_{\loc,Q^*}(W^*)$ and $\varphi'\in L^2_{\loc,Q^*}(W')$. Note that this implies that the contingent claim is regular. 

We compute the BNRM strategy $\nu=(\eta,\zeta)$ by Corollary \ref{cor:BNRM}:	first, by Equation \eqref{stochint}, we obtain for the self-financing part $X^\zeta$ that
$$ 
	dX^\zeta_t = \zeta_t \cdot dX_t =
	 -\zeta^0_t  X^0_t  \,  \theta_t  dW^*_t, 
$$	
where the second step follows by \eqref{ass:dX}. Uniqueness of the semimartingale decomposition yields that 
$$ \zeta^0_t X^0_t \theta_t =- \varphi_t^*, 
$$ 
$t \ge 0$. By the assumption that $\theta_t $ is not zero, 
we obtain 
$$ \zeta^0_t = -\varphi_t^* (\theta_t X^0_t)^{-1}.
$$
The condition \eqref{eq:eta}  yields 
$$ 
\eta_t = \int_0^t \varphi'_s \cdot dW'_s
$$
and it follows by \eqref{tildeHT''}
\begin{equation}
\zeta^1_t=\zeta^1_0+\frac{H^*_t}{S^*_t}-\frac{H^*_0}{S^*_0}-\eta_t.
\end{equation} This shows by \eqref{tildeHT''} that $\nu=(\eta,\zeta)$ is the BNRM strategy. Analogously, one can handle nonreplicable contingent claims in the  general setting.

\section{Working Capital and Refinancing}
\label{sec:working capital}

In this section we discuss how a financial institution or {\em line of business} (LOB) can produce the payoffs of derivatives and contracts  for its customers by employing the above BN pricing and hedging methodology. %
 The {\em production portfolio $P$} of an LOB is given by the initial position plus the value of the self-financing trading strategy aiming at reproducing the hedgeable parts of the contingent claims in the portfolio. The {\em working capital $C$}  is  needed by the LOB to produce the payoffs of the contingent claims in the book of the  LOB in a manner that avoids potential ruin and sufficiently contributes so that regulatory requirements are satisfied.

\subsection{Production Portfolio}
An LOB of an insurance company or a pension fund typically  holds numerous  liabilities in its book. We view  these liabilities as  contingent claims, representing payoffs that the LOB is obligated to produce over time and to deliver at the respective maturity dates to its clients.
In return for the delivery of the contingent claim, a client pays either a single premium at  initiation of the respective contract or makes periodic premium payments. Periodic premiums are studied in  \cite{bernard2017impact}, while we focus, for simplicity, on contracts with a single initial premium.\\ 
  The management of the LOB utilizes the paid premiums  in the production process to hedge the replicable  components of the contingent claims. Nonreplicable components are   diversified by managing these collectively in the book of the LOB.  %
  In the case of multiple LOBs, the remaining risks of the combination of these LOBs are further diversified in the overall book of the entire legal entity.\\ %
  
     Most   insurance and  pension  contracts involve contingent claims that are not fully replicable.  The intrinsic  randomness of these not fully replicable contingent claims impacts  the production of their payoffs. To manage this randomness and to avoid potential ruin, the LOB needs potential access to  additional external capital that may have to be used in the production process when necessary.   
         At the legal entity level, this should be made possible by taking into account a {\em capital cost provision} (called risk margin under Solvency II, or market value margin under the Swiss Solvency Test). This capital cost provision should, in particular, ensure that sufficient assets are set aside to provide compensation for new capital to be raised in the future %
   in the case of adverse risk developments.\\ %

In the following, we %
propose a transparent design of the  \emph{production process} for  a given set of contingent claims in the book of an  LOB. We  aim  at the most economical production method by  using BN risk-minimization and  use the stock GOP for  investing  any  extra capital kept in the production process.  Hence, this design of the production process is based on the,  in the long run, fastest growing, guaranteed strictly positive  investment, the stock GOP. %

To be more precise, consider a set of regular contingent claims $H^1_{T_1},\dots,H^m_{T_m}$ that are due at their respective maturities $T_1,\dots,T_m$. 
To the $i$-th contingent claim  we associate a self-financing trading strategy $\bar \zeta^i$ which aims at hedging its hedgeable part. 
The stock GOP-denominated dynamic production portfolio of the LOB is then given by the sum of these positions 
\begin{align}
	P := \sum_{i=1}^m \left( P_0^i + \int_0^\cdot  (\bar \zeta^i_s)^\top \cdot dX_s\right),
\end{align}
with initial position $P_0 = \sum_{i=1}^m P_0^i$, $i =1,\dots,m$.  After its maturity $T_i$, the production portfolio does no longer hedge the $i$-th claim, such that $\bar \zeta^i_t=0$ for $t>T^i$. We note that the trading strategy $\bar \zeta$ could possibly be different from the BNRM strategy $\zeta$, which we will introduce below.

\subsection{Working capital}
\label{ssec:Working capital}
The fair price $H^{i,*}$ of the $i$-th contingent claim $H^i_{T_i}$  is determined by the BN pricing formula \eqref{BPF},
$$ H^{i,*}_t = S_t^* \, E^*\bigg[ \frac{H^i_{T_i}}{S_{T_i}^*} \, \Big| \, \ccF_t \bigg], \qquad 0 \le t \le T_i. 
$$
The fair price does not include any compensation for the associated capital costs. In actuarial practice, the fair price is also called the \emph{best estimate}, which is  required, e.g., in the European Directive 2009/1388/CE Article 77 as BEL. The market-consistent value of the insurance obligations, however, includes an additional capital cost provision for compensating the risk capital providers. 

\begin{remark}[Additional capital costs]
\label{rem:Additional-capital-costs}
A capital cost provision can be added to our setting by increasing appropriately the initial premiums that the purchasers of contracts pay. However, this leads beyond the scope of the current paper.
\end{remark}

The capital that facilitates  the production of the payoffs of the contingent claims in the book of the liabilities of an LOB is called its \emph{working capital}. %
The working capital denominated  in terms of the stock GOP $S^*$ is denoted by the stochastic process $C'$. 
It  is given by the sum of the initial working capital $C'_0$ and the value of the dynamic production portfolio  $P$ minus the fair market value of the liabilities (all denominated in $S^*$), i.e. 
    \begin{equation}\label{eq:WC1}
     C'_t= C'_0+ P_t- \sum_{i=1}^{m} \frac{H^{i,*}_t}{S^*_t}, \qquad t \ge 0.
    \end{equation}
Since each contingent claim is assumed to be regular, Corollary \ref{cor:BNRM} allows us to introduce the  associated  BNRM strategies $\nu^i=(\eta^i,\zeta^i),$ $i=1,\dots,m$,  such that
$$ \frac{H^{i,*}_t}{S^*_t} = \frac{H^{i,*}_0}{S^*_0} + \int_0^t (\zeta^i_s)^\top \cdot dX_s + \eta^i_t. $$
Together with Equation \eqref{eq:WC1} we obtain the representation
\begin{align}\label{eq:C'}
	C'_t = C'_0 + P_0 -\sum_{i=1}^m \frac{H^{i,*}_0}{S^*_0} + \int_0^t \sum_{i=1}^m(\bar \zeta_s^i - \zeta_s^i)^\top \cdot dX_s - \sum_{i=1}^m \eta_t^i. 
\end{align}
For later reference we denote $H^* = \sum_{i=1}^m H^{i,*}$ and
$$ \eta := \sum_{i=1}^m \eta^i. $$
If the LOB follows the  BN risk-minimizing strategy starting with initial positions equal to the initial fair price, paid by the contractors of the contingent claims, it holds that  $$\label{P0i} P_0^i = \frac{H^{i,*}_0}{S^*_0},$$ $i\in\{1,...,m\}$, and we  obtain the working capital in the following form:
\begin{corollary}\label{cor:WC}
	If the trading strategy $\bar \zeta$ refers to the BNRM strategy $ \nu = (\eta,\zeta)$, i.e.\ $\zeta = \bar \zeta$ and \eqref{P0i},  then the stock GOP-denominated working capital equals
	\begin{align}\label{eq:WC}
		C' = C'_0 - \eta.
	\end{align}  
\end{corollary}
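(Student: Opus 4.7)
The plan is to prove this corollary by direct substitution into the representation \eqref{eq:C'} of the working capital that was just derived. Since all the structural work has already been done in the derivation of \eqref{eq:C'} from the BN decompositions of the individual fair prices $H^{i,*}$ via Corollary \ref{cor:BNRM}, there is essentially no new analytical content to establish; the corollary is a specialization of a more general identity.

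More concretely, I would first recall the general expression
\begin{align*}
	C'_t = C'_0 + P_0 - \sum_{i=1}^m \frac{H^{i,*}_0}{S^*_0} + \int_0^t \sum_{i=1}^m (\bar \zeta_s^i - \zeta_s^i)^\top \cdot dX_s - \sum_{i=1}^m \eta_t^i,
\end{align*}
which holds for arbitrary self-financing production positions $\bar \zeta^i$ and BNRM strategies $\nu^i = (\eta^i, \zeta^i)$. Under the hypothesis $P_0^i = H^{i,*}_0 / S^*_0$ for each $i$, summing yields $P_0 = \sum_{i=1}^m H^{i,*}_0 / S^*_0$, so the constant-in-time terms cancel. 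Under the further hypothesis $\bar \zeta = \zeta$, the stochastic integrand $\sum_i(\bar\zeta^i_s - \zeta^i_s)$ vanishes identically on $[0,t]$ and hence so does the integral. What remains is precisely $C'_t = C'_0 - \sum_{i=1}^m \eta^i_t = C'_0 - \eta_t$ by the definition of $\eta$ introduced just before the statement.

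There is no genuine obstacle here: the only point worth checking carefully is the admissibility of the substitutions, i.e.\ that the summation of the individual BN decompositions into a single decomposition is legitimate (which uses linearity of the stochastic integral and of the local-martingale property, both valid since each $\zeta^i \in L^2_{\loc,Q^*}(X)$ and each $\eta^i$ is a locally square-integrable $\bbF$-$Q^*$-local martingale $Q^*$-orthogonal to $X$). I would conclude by noting that \eqref{eq:WC} makes transparent the economic meaning of the working capital: after the initial premiums are fully invested in the BNRM strategies, $C'_0$ is the only externally supplied capital, and its evolution in the stock GOP num\'eraire is driven exclusively by the aggregate unhedgeable monitoring process $\eta$.
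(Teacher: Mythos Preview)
Your proposal is correct and matches the paper's approach exactly: the corollary is presented there as an immediate consequence of the representation \eqref{eq:C'}, obtained by substituting $\bar\zeta=\zeta$ and $P_0^i=H^{i,*}_0/S^*_0$ so that the constant terms cancel and the stochastic integral vanishes. There is nothing to add.
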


 \subsection{The refinancing process}\label{sec:counter-cyclical}
 There exist many possible ways of choosing the working capital  for the LOB  in  a way that may avoid ruin and sufficiently supports that the regulatory requirements are satisfied. From a prudent risk management point of view, the working capital should always stay above a \emph{critical capitalization level}.  This level has to be chosen appropriately such that the probability of ruin is sufficiently small, a cost efficient production method can be employed, and the regulatory requirements are satisfied.   For an LOB that applies BN risk-minimization, the working capital is invested in the stock GOP.  The latter is driven by randomness that is not diversifiable and difficult to avoid or offset in the production process. Therefore, the current paper suggests to employ a predictable, positive c\`adl\`ag process $D$ as critical capital level that is denominated in units of the stock GOP $S^*$. The latter may involve the consideration of a certain  \emph{safety margin}.

 \begin{remark}\label{rem:counter-cyclical}
 	Since  the critical capital level is in terms of $S^*$, the proposed BN risk-minimization  strategy  encourages \textquoteleft counter cyclical' risk management because in times of a market drawdown also the critical capital level $DS^*$ is dropping substantially. 
 \end{remark}

 One can argue that the stock GOP is, in a certain sense, in the long run the most secure and  fastest growing guaranteed strictly positive investment opportunity. It is arguably safer in comparison to the savings account of currencies that might be subject to devaluation or even default risk.
  When the working capital falls below the critical capital level, some action must be taken. This usually means that some \emph{refinancing} of the working capital in the LOB  must take place or bankruptcy is looming. It makes sense to implement a legally binding and transparent algorithm for such a refinancing strategy  where the management of the LOB is obliged to raise some prescribed external  capital when needed. To facilitate this, we introduce the c\`adl\`ag and piecewise constant \emph{refinancing capital process} $R$. The \emph{refinanced working capital } is consequently given by
 \begin{align}
 	C_t = C_t' + R_{t-}, \qquad t \ge 0. 
 \end{align}
 Note that in the refinanced working capital process we  omit the new refinancing capital at refinancing times. At initiation, one has $R_0=0$ and the production starts with sufficient capital, i.e.,
 \begin{equation}
 	C_0 = C'_0 > D_0 > 0.
 \end{equation}

 To specify the algorithm for a transparent refinancing process,  we introduce a refinancing time as a stopping time that is given by the first time after the previous refinancing time when the refinanced working capital falls below or reaches the critical capital level. 
 Accordingly,  
 we introduce the \emph{$k$-th refinancing time} 
 \begin{align}
 	\rho_k := \inf\{t > \rho_{k-1} \colon C_t \le D_t\},
 \end{align}
 for $k=1,2,\dots$ with $\rho_0=0$.  %
  At each refinancing time $\rho_k$ the refinanced working capital is raised to a sufficiently high level $\mu_k D_{\rho_k}$ which is determined by the  \emph{refinancing ratio} $\mu_k >1$. For illustration, we assume continuous working capital $C'$, constant critical capitalization level $D=D_0>0$, and constant refinancing ratio $\mu_k=\mu>1$ for $k\in\{1,2,...\}$. Consequently, the extra capital given by the difference between the envisaged new  level $\mu D$ and the capital available at the refinancing time $\rho_k$, $C_{\rho_k}=C'_{\rho_k}+R_{\rho_k-}=D$ is given by
 $ \mu D - D =D(\mu-1). $
This  determines the refinancing capital 
\begin{align}
\label{eq:R} R_t =(\mu-1) D \sum_{k \ge 1} \ind{\rho_k \le t}    . 	
 \end{align}
Refinancing  is provided from external funds that are raised in  the stock GOP $S^*$. 
The refinancing capital required at $\rho_k$  %
  assures that the refinanced working capital rises to the critical capitalization level. \\ 
 
 Formulating the refinancing strategy in the above transparent algorithmic manner offers a number of advantages: first, the underwriters can judge their  potential upcoming refinancing costs for a long period in advance. 
Second,    the supervisory authority can assess the financial position of the LOB at any time by using the available information about its assets and liabilities, its implemented refinancing strategy, and the financial market.
Third,  one could potentially determine optimal critical capitalization levels by following suggestions in \cite{LelandToft96}.

For a given  refinancing strategy, the resulting working capital $C$ follows a stochastic process that arises from the previously introduced BN risk-minimization and the exposure of the LOB to nonhedgeable randomness.  One can analyze the risk involved when running off scenarios for the refinanced working capital of  the LOB, e.g., by  calculating the value at risk, the expected short fall, or the ruin probability;  see, e.g.,  Section 8 in \cite{MFE}.\\

 Below we indicate how to obtain the BN price for the cost of refinancing over a fixed period. The refinancing times typically depend on the Brownian motions driving the primary security accounts. Therefore,   BN pricing with refinancing strategy requires modeling the refinancing times under the $Q^*$-probability measure.
For  illustration, assume that the working capital $C'$ %
follows a continuous strong Markov  process with known boundary crossing probabilities, e.g., a geometric Brownian motion, an Ornstein-Uhlenbeck process, or a squared Bessel process. %
 In this case  the unconditional $Q^*$-probabilities for the refinancing times can be calculated; see, e.g., Section 2.6.A in \cite{KaratzasShreve1988}, \cite{GoeingYor2003},  and  \cite{BorodinSalminen.02}. 	 For a time dependent function $D_t$ modeling the critical working capital level and the case of a geometric Brownian motion	 the methodology proposed in  \cite{SchmidtNovikov:2008} could be applied.
 This leads to the following  conclusion:
\begin{corollary}\label{lem:refinancing times}
Assume that the unconditional BN probability  for the $k$-th refinancing time arising before $T\in (0,\infty)$ is given by 
$ P^*(\rho_k \le T) $ for $k\in\{1,2,...\}$.	
	The BN price, denominated in the stock GOP $S^*$, of the refinancing cost for the period until time $T$  is given by the formula
	\begin{align}
	E^*[R_T] = (\mu-1)D \cdot \sum_{k \ge 1} P^*(\rho_k \le T). 
	\end{align}
\end{corollary}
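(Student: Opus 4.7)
The strategy is simply to take the $Q^*$-expectation of the explicit formula \eqref{eq:R} term by term. The corollary is essentially a direct computation once one observes that the refinancing capital $R_T$ is a nonnegative sum of indicators, so no delicate integrability hypotheses are needed.

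First I would start from the representation
$$ R_T = (\mu-1) D \sum_{k\ge 1} \ind{\rho_k \le T}, $$
which is the standing illustrative assumption (constant critical level $D$ and constant refinancing ratio $\mu$). Pulling the deterministic prefactor $(\mu-1)D$ out of the expectation yields
$$ E^*[R_T] = (\mu-1)D \cdot E^*\Big[\sum_{k\ge 1} \ind{\rho_k \le T}\Big]. $$

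Next I would exchange sum and expectation. Since each indicator $\ind{\rho_k\le T}$ is nonnegative and each $\rho_k$ is an $\bbF$-stopping time (as an iteratively defined first hitting time of the adapted, $Q^*$-a.s.\ continuous process $C$ across the predictable barrier $D$), Tonelli's theorem applies without any further assumption and gives
$$ E^*\Big[\sum_{k\ge 1} \ind{\rho_k \le T}\Big] = \sum_{k\ge 1} E^*[\ind{\rho_k \le T}] = \sum_{k \ge 1} P^*(\rho_k \le T). $$
Combining the last two displays produces the claimed formula.

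The only subtlety, which is really no obstacle, is that the expected refinancing cost may be infinite: if $\sum_{k\ge 1} P^*(\rho_k \le T)=\infty$, then both sides of the asserted identity equal $+\infty$ in $[0,\infty]$, which is legitimate under Tonelli and reflects the intuition that infinitely many expected refinancing events produce infinite expected refinancing cost. The existence and measurability of the unconditional $Q^*$-probabilities $P^*(\rho_k\le T)$ is exactly the hypothesis of the corollary, so no further structural argument (e.g.\ on boundary-crossing distributions of the underlying Markov process) is required at this step; those enter only when one wants to evaluate the individual probabilities $P^*(\rho_k\le T)$ in closed form, as referenced in the preceding discussion.
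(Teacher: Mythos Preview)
Your proposal is correct and matches the paper's treatment: the paper states the corollary as an immediate consequence of the explicit representation \eqref{eq:R} without giving a separate proof, and your argument---taking $E^*$ of \eqref{eq:R}, pulling out the constant, and using Tonelli to swap sum and expectation---is exactly the direct computation that justifies it.
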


\subsection{Diversification}
   Let us discuss the diversification effect of  a single LOB when the number $m$ of contingent claims in its book is sufficiently large and the associated monitoring processes are orthogonal. 
  	 Consider the BNRM trading strategies given in Corollary \ref{cor:BNRM}, where the monitoring processes $\eta_1,\dots,\eta_m$ are orthogonal (under $Q^*$) in the sense that $\langle \eta^i, \eta^k \rangle = 0$ for $i \neq j$.  
  The \emph{average working capital} per contract in the LOB is given by $\frac 1 m C$. The fluctuations of the working capital per contract can be measured via  its quadratic variation. Assuming  continuous monitoring processes,   the quadratic variation of the average working capital scales as in the weak law of large numbers. This leads to the following conclusion:
  \begin{corollary}
  	Assume that $\eta^1,\dots,\eta^m$ are continuous, orthogonal $\bbF$-$Q^*$-local martingales. Then the quadratic variation of the average working capital of the LOB is given by
  	\begin{align}\label{eq:average working capical}
  		 \langle  m^{-1}  C'  \rangle &=  \frac 1 {m^2} \sum_{i=1}^m \Big(
  		\int_0^\cdot \parallel \bar \zeta^i_s - \zeta^i_s \parallel^2 ds  + \langle \eta^i \rangle \Big). 
  	\end{align}
  \end{corollary}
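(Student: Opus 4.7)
My starting point is the representation \eqref{eq:C'} of the stock GOP-denominated working capital,
$$ C' = C'_0 + P_0 - \sum_{i=1}^m \frac{H^{i,*}_0}{S^*_0} + \int_0^\cdot \sum_{i=1}^m (\bar\zeta^i_s - \zeta^i_s)^\top dX_s - \sum_{i=1}^m \eta^i,$$
and the elementary fact that adding a finite-variation, hence here constant, term does not affect the quadratic variation. Accordingly, $\langle m^{-1} C'\rangle = m^{-2}\langle M\rangle$, where
$$ M := \int_0^\cdot \sum_{i=1}^m (\bar\zeta^i_s - \zeta^i_s)^\top dX_s - \sum_{i=1}^m \eta^i$$
is an $\bbF$-$Q^*$-local martingale by Theorem \ref{thm:BNpricing}(iv) and the regularity of each claim.

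The plan is then to expand $\langle M\rangle$ by bilinearity of the quadratic covariation. By Definition \ref{HT}, each $\eta^j$ is $Q^*$-orthogonal to $X$ in the sense that $\langle \eta^j, X^k\rangle = 0$ for every coordinate $k$, so that $\langle\eta^j, \int (\bar\zeta^i-\zeta^i)^\top dX\rangle = \int (\bar\zeta^i-\zeta^i)^\top d\langle X, \eta^j\rangle = 0$ for all $i,j$; the cross-terms between the hedgeable and monitoring parts of $M$ therefore vanish. Using the standing assumption $\langle \eta^i, \eta^k\rangle = 0$ for $i\ne k$, the monitoring contribution collapses to $\langle \sum_i \eta^i\rangle = \sum_i \langle\eta^i\rangle$.

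For the hedgeable part I would write
$$\Big\langle \int \sum_i (\bar\zeta^i - \zeta^i)^\top dX\Big\rangle = \int \sum_{i,k} (\bar\zeta^i - \zeta^i)^\top d\langle X\rangle (\bar\zeta^k - \zeta^k),$$
and interpret the notation $\|\bar\zeta^i_s - \zeta^i_s\|^2 ds$ as the diagonal quadratic variation density $(\bar\zeta^i_s - \zeta^i_s)^\top d\langle X\rangle_s (\bar\zeta^i_s - \zeta^i_s)$, which is the intrinsic quadratic variation norm induced by the $Q^*$-driving structure of $X$ recorded in Theorem \ref{thm:BNpricing}(ii). After collecting the diagonal in the claimed form and dividing by $m^2$, \eqref{eq:average working capical} follows.

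The main obstacle is that, as stated, the assumption gives orthogonality only of the monitoring processes $\eta^i$, whereas passing from $\sum_{i,k}$ to $\sum_i$ also requires the cross-covariations $(\bar\zeta^i-\zeta^i)^\top d\langle X\rangle (\bar\zeta^k - \zeta^k)$ to vanish for $i\ne k$. This is the natural structural counterpart of the monitoring orthogonality that underpins the diversification / weak-law-of-large-numbers interpretation announced in the text, and has to be invoked either as a parallel hypothesis on the self-financing discrepancies or as part of the reading of the $\|\cdot\|^2 ds$ notation. Once that is accepted, the remaining steps are the by-now routine bilinear computation sketched above.
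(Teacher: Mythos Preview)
Your approach mirrors the paper's: start from \eqref{eq:C'}, drop the constant initial terms, and expand the quadratic variation by bilinearity, using the $Q^*$-orthogonality of each $\eta^j$ to $X$ from Definition~\ref{HT} to kill the hedgeable/monitoring cross-terms and the assumed pairwise orthogonality of the $\eta^i$ to collapse $\langle\sum_i\eta^i\rangle$ to $\sum_i\langle\eta^i\rangle$. The paper is even terser: its first displayed line is
\[
\langle m^{-1}C'\rangle_t \;=\; \frac{1}{m^2}\sum_{i=1}^m\Big\langle \int_0^t(\bar\zeta^i_s-\zeta^i_s)^\top\,dW^*_s-\eta^i_t\Big\rangle,
\]
silently replacing $dX$ by $dW^*$ (so that the $\|\cdot\|^2$ in the statement is the ordinary Euclidean norm), and then invoking the two orthogonalities to split the bracket.

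The obstacle you flag is genuine, and the paper's proof does not close it either. Writing $M^i=\int(\bar\zeta^i-\zeta^i)^\top\,dW^*-\eta^i$, the step $\langle\sum_i M^i\rangle=\sum_i\langle M^i\rangle$ also needs the cross-terms $\int_0^\cdot(\bar\zeta^i_s-\zeta^i_s)^\top(\bar\zeta^k_s-\zeta^k_s)\,ds$ to vanish for $i\ne k$, and nothing in the stated hypotheses forces this; the paper simply asserts the separated sum without comment. Your diagnosis is therefore correct: either an implicit orthogonality of the hedging discrepancies $\bar\zeta^i-\zeta^i$ across contracts is intended (consistent with the diversification narrative surrounding the corollary), or the formula should feature $\|\sum_i(\bar\zeta^i_s-\zeta^i_s)\|^2$ rather than $\sum_i\|\bar\zeta^i_s-\zeta^i_s\|^2$.
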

  \begin{proof}
  	Recall Equation \eqref{eq:C'}, which yields that
  	\begin{align*}
  		\langle m^{-1}  C'  \rangle_t &= \frac 1 {m^2 }\sum_{i=1}^m 
  		\Big \langle  \int_0^t (\bar \zeta_s^i - \zeta_s^i)^\top \cdot dW^*_s - \eta_t^i \Big\rangle. 
  	\end{align*}
  	The result follows because each $\eta^i$ is by Definition \ref{HT} orthogonal to $W^*$ and by assumption $Q^*$-orthogonal to the other monitoring processes.
  \end{proof}
  In the case where the chosen trading strategy is the respective BNRM strategy, Equation \eqref{eq:average working capical} simplifies by Equation \eqref{eq:WC} to
  \begin{equation}
  	 \langle  m^{-1} C'  \rangle = \frac 1 {m^2}  \sum_{i=1}^m \langle \eta^i \rangle . 
  \end{equation}
  If the quadratic variations of the monitoring processes are bounded on the considered time interval $[0,\max\{T_1,\dots,T_m\}]$, and the time horizons remain bounded in the sense that $\max\{T_1,\dots,T_m\}<T^*<\infty$, then the quadratic variation of the average working capital converges to zero as the number of contingent claims, $m$, increases.   
  One can interpret this asymptotic property as the {\em diversification effect} that arises when pooling the production of the payoffs of a set of contingent claims in an LOB and applying  BNRM strategies. Note that one can derive even stronger results on the diversification effect as shown in  \cite{ArtznerEiseleSchmidt2024} using a conditional strong law of large numbers, as given in Theorem 3.5 in \cite{Majerek2005}. 
  
  \bigskip

   The success of an LOB  depends, in the long run, on the appropriate determination of the  initial working capital $C_0'$, the recapitalization ratio $\mu$, and the critical capitalization level.  Equation \eqref{eq:average working capical} highlights the fact that the  management of an  LOB has two main possibilities to reduce the necessary amount of average working capital  per contract: first, by hedging the replicable parts of the respective contingent claims as accurately as possible using BN risk-minimization, which leads to the minimal possible amount of the working capital needed for the hedgeable parts of the contingent claims; second, by pooling  more   contingent claims with preferably pairwise $ Q^*$-orthogonal stock GOP-denominated monitoring processes. The pooling increases for increasing $m$ the diversification effect, which means that it reduces the quadratic variation of the working capital proportional to $\frac{1}{m}$.   The critical capitalization level has to be determined by the management of the business line in such a way that, in particular, the regulatory requirements are satisfied.

For an overview on currently popular capital allocation methods in case of several LOBs we refer to Section 8.5 in \cite{MFE}. For a discussion of  economic properties of the  Euler principle for capital allocation (in particular, if applied based on a subadditive risk measure), we refer to \cite{tasche2008capital} and the literature cited therein. 
Since expected shortfall is highly sensitive to events in the far tail, in cases with extreme events, expected shortfall methods can lead to extremely high (and in turn for other LOBs low) capital allocations; see for example \cite{Pfeifer}. Finally, we refer to \cite{duran2022capital} for an analysis of different capital allocation methods under Solvency II, and to \cite{bielecki2020fair} for an account of unbiased capital allocation with improved backtesting properties. 

In the context of benchmark-neutral pricing and BN risk-minimization, it is important to note that current regulatory frameworks and modeling practice do rarely employ a proxy of the stock GOP $S^*$, like a stock index, as num\'eraire or reference unit. Instead, they typically employ the savings account as the num\'eraire and reference unit, cf.~e.g.~Section ii) in~\cite{BaFin2016}. As discussed in Section~\ref{sec:insurance-finance-arbitrages}, this reflects a particularly strong version of absence of arbitrage, which also excludes approximations of arbitrage that are hardly relevant in reality. Related to that, note that Article 22, Para. 3, lit b in the European Commission Delegated Regulation (EU) 2015/35 does not  require a risk-neutral valuation, but only the exclusion of arbitrage opportunities.  Note that in Section~\ref{sec:insurance-finance-arbitrages}, we show that benchmark-neutral pricing indeed effectively avoids insurance-finance arbitrages of the first kind, which corresponds to a wholly realistic no-arbitrage assumption, which is based on mathematical concepts that have become widely established. Furthermore, and as documented in~\cite{avo2023}  SR 901.011 (2023) Art.~28 of the changes (Aenderungen) of the AVO SR 901.011 (2023) and on page 21 in the respective details (Erlaeuterungen), also the regulation in Switzerland leaves some room for benchmark-neutral pricing since it is permitting mathematically founded valuation approaches with respect to different numéraires, which are, however, subject to approval. When applying benchmark-neutral pricing, care must be taken to ensure that this is consistent with the asset liability management used in practice and that the remaining risks are all consistently taken into account in the capital cost provision, cf.\ Section~\ref{ssec:Working capital} and in particular Remark~\ref{rem:Additional-capital-costs}. The choice of $S^*$ as numéraire also becomes crucial when determining the critical capitalization level for the working capital because this level is denominated in $S^*$. Recall, that according to Remark~\ref{rem:counter-cyclical} this yields a \textquoteleft counter cyclical' critical capital level in nominal terms. It is nevertheless well possible that the most suitable incorporation of BN-risk minimization into optimized regulatory capital requirement involves the appropriate consideration of several num\'eraires in the design of the capital requirement for an institution. However, this question goes beyond the scope of this paper and is left for future research.

\subsection{Non-replicable contingent claims in an  LOB}
 
For further illustration, consider an  LOB that produces the payoffs of  nonreplicable contingent claims, like non-traded wealth. The so-called QP-rule was  suggested in \cite{dybvig1992hedging} for that purpose and we refer to Section \ref{sec:QP} below for a discussion.  Assume, for simplicity, that the contingent claims in the book of the LOB have the same maturities, $T_1=\dots=T_m=:T\le T^*$, and that 
$H^i_{T}= \Ind_{A^i}$, i.e., the claims are binary options paying one unit of the savings account if $A_i$ occurs before $T$.  
If the events $A_1,\dots,A_m$ are independent of the driving Brownian motions and Assumption \ref{ass} holds, then Theorem \ref{thm:BNpricing} yields that
\begin{align}
	H_t^{i,*} &= S_t^* E^*\left[ \frac{\Ind_{A^i}}{S_T^*} \Big| \ccF_t \right] = 
	S_t^{**} E_P\left[ \frac{\Ind_{A^i}}{S_T^{**}} \Big| \ccF_t \right] = S_t^{**}E_P\left[ \frac{1}{S_T^{**}} \Big| \ccF_t \right] \cdot E_P[\Ind_{A^i} | \ccF_t] \notag %
	&=: P(t,T) \cdot P^i_t.
	\label{eq:decomposition non-replicable CC}
\end{align}
Here we denote the probability of the event $A^i$ by
$$ P^i_t := E_P[\Ind_{A^i} | \ccF_t] $$
and by 
 $P(t,T)$  the fair price of a zero-coupon bond that pays one unit of the savings account at maturity $T$. The latter is resulting from the BN-pricing formula \eqref{BPF}. Note that the independence assumption allows one to separate these two components. A classical change of num\'eraire could be considered when the change to the savings account as num\'eraire would be feasible with an equivalent measure change. Since this may not be realistic (see the following Section \ref{sec:MMM} for a discussion) we do not pursue such a classical measure change. %

By \eqref{BPF}%
, we obtain the  BN decomposition
\begin{equation}
\frac{H_T^{i,*}}{S^*_T} = \frac{H_t^{i,*}}{S^*_t} +\int_{t}^{T} P^i_s \, d \left(\frac{P(s,T)}{S^*_s}\right)+\int_{t}^{T}  \frac{P(s,T)}{S^*_s} \, dP^i_s
\end{equation}
because $P(.,T)$ is continuous and by the independence assumption the quadratic covariation vanishes. 
Since the zero-coupon bond price can be represented in terms of the driving process $W^*$, the  monitoring process is given by the integral
\begin{equation}
\eta_t^i = \int_{0}^{t} \frac{P(s,T)}{S^*_s}\, dP^i_s.
\end{equation}
The quadratic variation of the  working capital equals by  Equation \eqref{eq:WC}  
\begin{equation}
\langle m^{-1} C' \rangle_t 
= \frac 1 {m} \cdot \bigg( \frac 1 m 
\sum_{i=1}^m \int_{0}^{t} \left( \frac{P(s,T)}{S^*_s}\right)^2 \, d\langle P^i\rangle_s \bigg) ,
\end{equation}
which  decreases proportionally to $m^{-1}$ when the number $m$ of contingent claims in the book of the  LOB increases.

All working capital that is put aside to produce the payoffs of the contingent claims is invested in the stock GOP, which is, in the long run, the fastest-growing, guaranteed strictly positive  portfolio that the LOB can invest in. When including a capital cost provision in a generalization of the presented methodology, also the capital cost provision should be invested in the stock GOP.

If one would use  local risk-minimization under a putative minimal equivalent martingale measure, as described in \cite{FollmerSc91} and \cite{Moller98}, then the prices for the zero-coupon bonds we employ would equal  the value of one unit of the savings account. These prices are for long terms to maturity significantly higher than the fair prices that BN pricing identifies; see, e.g., \cite{Platen24}. Furthermore, under local risk-minimization  no dynamic hedging  would occur when the savings account denominated contingent claim is independent of the savings account denominated primary security accounts. The working capital that would be needed under local risk-minimization would be  invested in the savings account, which has shown over long time periods a significantly lower average growth rate than well-diversified stock portfolios, like the stock GOP. Therefore, local risk-minimization  exhibits the following two disadvantages: first, its prices are, in general, more expensive; and, second, it misses out on the  faster-growing investment of the  working capital  that BN risk-minimization  facilitates.

\section{Simplified stock GOP model}
\label{sec:MMM}

Following \cite{Platen24},  we consider in this section a parsimonious  model that captures  empirical properties of  the stock GOP; see  Section 13.1 in \cite{PlatenHe06} for a more detailed account. %
With reference to Section \ref{sec:simplified}, we call this model the \emph{ simplified stock GOP model}.

\subsection{The simplified stock GOP model dynamics}
The following model  captures three  well-established stylized empirical facts of diversified stock portfolios: the  {\em stationarity} of their volatilities; the \emph{leverage effect}, which links their volatilities to the ups and downs of the market; and the \emph{overall exponential growth} of diversified portfolios over long periods. These properties will be encapsulated in a time-changed model in the simplified market setting through a particular choice of its volatility.

To this end, consider the simplified market setting and, following Equation \eqref{eq:stock GOP dynamicsy}, rewrite the dynamics of $S^*$ in the form
\begin{align}\label{eq:6.1}
	 	\frac{dS^{*}_t}{S^{*}_t} &=\theta_t  ( \theta_t  + dB^*_t), \qquad t \ge 0
	 \end{align}
with $\theta = \, ||\, \sigma^* || $,
 where $B^*$ is the one-dimensional $Q^*$-Brownian motion given by 
$$ dB^*_t = \frac 1 {\theta_t} \,\sigma_t^*{}^\top dW_t^*, \qquad t \ge 0
$$ 
with $B^*_0=0$. The process $\theta$ is specified as
\begin{equation}\label{theta2'''}
\theta_{t}=\sqrt{\frac{4a e^{ \tau_t} }{S^*_t}}, \quad t \ge 0,
\end{equation}
which includes a time-transformation involving the \emph{activity time}
$$ \tau_t =\tau_0 + a t$$
with initial activity time $\tau_0$ and {\em activity} $a>0$. This introduces for the volatility $\theta$ its stationarity   through an appropriate choice of the activity time, the leverage effect through $S^*$  in the denominator of the squared volatility, and   the long-term exponential growth of the stock GOP through the exponential of the activity time in the drift and diffusion coefficients of the  SDE below. The choice of $\tau_0$  allows adjusting the model's volatility at the initial  time $0$. %
Inserting \eqref{theta2'''} into \eqref{eq:6.1} leads to the dynamics of the stock GOP  
\begin{align} 
\label{e.4''}
 dS^{*}_t
=4a e^{\tau_t}dt
+\sqrt{4a e^{\tau_t}} \, \sqrt{S_t^*} \,  dB^*_t, \qquad t \ge 0\end{align}
with $S^*_{0}>0$.
We thus arrive under $Q^*$ at a model in the simplified market setting, where
 $S^*$ is a time-transformed squared Bessel process of dimension four; see \cite{RevuzYo99}.

 The particular trajectory of the net risk-adjusted return $\lambda^*_t$, see \eqref{e.4'}, does not matter when applying BN pricing, which is extremely important from the practical perspective. For  simplicity, we assume that $\lambda^*=(\lambda^*_t)_{t \ge 0}$ 
 is a c\`adl\`ag, bounded deterministic function of the calendar time.

\begin{proposition}\label{prop:6.1}
	Consider the simplified stock GOP model given in Equation \eqref{e.4''} and
	a finite time horizon $T^*>0$. Define $Q^*$ by 
	\begin{align}
		\frac{dQ^*}{dP} = \Lambda_{T^*}
	\end{align}
	and assume  $\lambda^*=(\lambda^*_t)_{t \ge 0}$ 
	to represent a c\`adl\`ag, bounded deterministic function of the calendar time. %
	Then $Q^*$ is an equivalent local martingale measure when $S^*$ is chosen as the num\'eraire. Moreover, prices of $\ccF_{T^*}$-measurable contingent claims computed with the BN pricing formula \eqref{BPF} coincide with fair prices obtained by the real-world pricing formula in \eqref{eq:real-world pricing}.
\end{proposition}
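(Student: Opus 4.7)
The plan is to reduce the proof to verifying Assumption~\ref{ass} for this concrete model: once the density process $\Lambda$ is shown to be a true $\bbF$-$P$-martingale on $[0,T^*]$, Theorem~\ref{thm:BNpricing}(iv) immediately yields that $Q^*$ is an equivalent local martingale measure for $S$ with $S^*$ as num\'eraire, and Theorem~\ref{thm:BNpricing}(i) gives $H=H^*$ for any contingent claim satisfying the standing integrability conditions, i.e.\ the coincidence of BN and fair prices. So the entire burden lies on establishing the true-martingale property of $\Lambda$ on the bounded interval $[0,T^*]$.

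For the setup I would first recall from Remark~\ref{rem:3.3} that in the simplified market setting
$$d\Lambda_t = -\Lambda_t\,\frac{\lambda^*_t}{\theta_t}\,dB_t,$$
so $\Lambda$ is automatically a nonnegative $\bbF$-$P$-local martingale and hence, by Fatou, a supermartingale. Substituting the explicit form $\theta_t^2 = 4ae^{\tau_t}/S^*_t$ shows that the squared integrand in the Girsanov exponent equals $(\lambda^*_t)^2 S^*_t/(4ae^{\tau_t})$, which is \emph{not} a priori bounded because of the $S^*$-factor. Consequently Novikov's and Kazamaki's criteria do not apply directly, and a more structural argument is needed.

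The main step, and the main obstacle, is then to upgrade $\Lambda$ from a supermartingale to a genuine martingale by a non-explosion criterion of the Mijatovi\'c--Urusov/Ruf type. The candidate dynamics of $S^*$ under the prospective measure $Q^*$ is precisely \eqref{e.4''}, a time-transformed squared Bessel process of dimension four. Since the dimension strictly exceeds two, the candidate process is strictly positive, does not reach the origin, and does not explode on $[0,T^*]$. Combined with the boundedness and c\`adl\`ag regularity of $\lambda^*$, this non-explosion of the candidate SDE is exactly what is needed to conclude that $\Lambda$ is a genuine $\bbF$-$P$-martingale on $[0,T^*]$, so that $dQ^*/dP=\Lambda_{T^*}$ defines an equivalent probability measure. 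The hard part is precisely this upgrade: the unbounded $S^*$-dependence in the Girsanov exponent rules out classical exponential-moment tools, and one genuinely has to exploit the Bessel structure of the $Q^*$-dynamics, which is the reason the model~\eqref{e.4''} was chosen in dimension four in the first place.
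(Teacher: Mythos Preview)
Your overall strategy is sound: once Assumption~\ref{ass} is verified, Theorem~\ref{thm:BNpricing} (i) and (iv) give both conclusions of the proposition, so the only real work is the true-martingale property of $\Lambda$ on $[0,T^*]$. Your proposed route via a Mijatovi\'c--Urusov/Ruf non-explosion argument is valid in principle: the $Q^*$-candidate dynamics \eqref{e.4''} is a time-changed ${\rm BESQ}(4)$, which neither hits zero nor explodes, and together with the boundedness of $\lambda^*$ this is enough to conclude that the stochastic exponential $\Lambda$ is a genuine martingale.

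The paper, however, takes a considerably more elementary route that you overlooked. Writing $d\Lambda_t/\Lambda_t=-c(t)\sqrt{S^*_t}\,dB_t$ with the deterministic bounded function $c(t)=\lambda^*_t(4ae^{\tau_t})^{-1/2}$, one has $\langle\Lambda\rangle_t=\int_0^t c^2(s)\Lambda_s^2 S^*_s\,ds$ and, more usefully, the quadratic variation of the driving local martingale is $\int_0^t c^2(s)S^*_s\,ds$. Under $P$ the drift of $S^*$ is \emph{linear} in $S^*$ (namely $S^*_t\lambda^*_t+4ae^{\tau_t}$), so $t\mapsto E_P[S^*_t]$ solves a first-order linear ODE and is explicitly finite on $[0,T^*]$. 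Hence $E_P[\langle\Lambda\rangle_{T^*}]=\int_0^{T^*}c^2(s)E_P[S^*_s]\,ds<\infty$, and the square-integrable-martingale criterion (e.g.\ Protter, Corollary~3 on p.~73) immediately upgrades $\Lambda$ to a true martingale. This bypasses Novikov/Kazamaki and the non-explosion machinery entirely; the linear mean of $S^*$ under $P$ is the elementary observation that makes the argument short. Your approach buys greater generality (it would survive non-linear drifts where $E_P[S^*_t]$ is not readily computable), while the paper's buys a two-line self-contained computation with no external structural theorems.
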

\begin{proof}
First, note that the stochastic exponential $\Lambda$ satisfies, by Remark \ref{rem:3.3}, 
	$$ \frac{d\Lambda_t}{\Lambda_t} = - \frac{\lambda_t^*}{\theta_t} dB_t= - \frac{\lambda_t^*}{\sqrt{4a e^{\tau_t}}}  \, \sqrt{S_t^*} \, d B_t = - c(t) \sqrt{S_t^*} \,dB_t, \qquad t \ge 0 $$
	with the deterministic function $c(t) = \lambda^*_t \cdot (4a e^{\tau_t})^{-\nicefrac 1 2}$ . 
		Moreover, by Equation \eqref{e.4''} and Lemma \ref{sigma*minus} the stock GOP satisfies
	$$ dS_t^* =S^*_t\lambda^*_tdt+ a(t) dt + \sqrt{a(t)} \, \sqrt{S_t^*} \, dB_t, \qquad t \ge 0$$
	with the deterministic function $a(t) = 4 a e^{\tau_t}$. Its expectation equals
	\begin{equation}
E_P[S^*_t]=\Phi_{t}\left(E_P[S^*_0]+\int_{0}^{t}a(s)\Phi_s^{-1}ds\right)
	\end{equation}
		with 
		\begin{equation}
		\Phi_t=\exp\{\int_{0}^{t}\lambda^*_sds\}, \qquad t \ge 0.
		\end{equation} It is straightforward to show that the $\bbF$-$P$-local martingale $\Lambda$ is square-integrable. Using the above expectation of $S^*_t$, we obtain
		\begin{equation}
		E_P[\langle\Lambda_.\rangle_t]=\int_{0}^{t}c^2(s)E_P[S^*_s]ds=\int_{0}^{t}c^2(s)\Phi_s\left(E_P[S^*_0]+\int_{0}^{s}a(z)\Phi_z^{-1}dz\right)ds<\infty
		\end{equation}
	for $t\in[0,\infty)$. Since this expectation is finite, it follows by Corollary 3 on page 73 in \cite{Protter2005} that $\Lambda$   is a true  $\bbF$-$P$-martingale.	%

  In this case, the benchmark-neutral pricing measure is   an equivalent probability measure. 
 It  follows, as in Theorem \ref{thm:BNpricing}, that prices discounted with $S^*$ are $\bbF$-$Q^*$-local martingales and that benchmark-neutral pricing  provides the same fair prices as  real-world pricing.  
\end{proof}

\subsection{The relation to risk-neutral pricing} \label{sec:putative RN measure}

Classical  risk-neutral pricing employs the  riskless savings account  as num\'eraire for  pricing and relies on the  no free lunch with vanishing risk (NFLVR) no-arbitrage condition, which is equivalent to the existence of an equivalent risk-neutral probability measure, which we denote by $Q^0$.  Under $Q^0$  prices denominated in the savings account are $\bbF$-$Q^0$-local martingales. When the benchmark-neutral pricing measure $Q^*$ is an equivalent probability measure, it has been shown by Theorem 4.2 in \cite{Platen24} that the  putative risk-neutral pricing measure $Q^0$ is not an equivalent probability measure.
This stems from the fact that the Radon-Nikodym density changing from $Q^*$ to $Q^0$ is given by
$$ \Lambda_T^0 = \frac{1}{S_T^*}$$ 
(recall that $X$ is already stock GOP-denominated). As already remarked,  $S^*$ is under $Q^*$ a time-transformed squared Bessel process of dimension four, such that its inverse $\Lambda^0$ becomes  a strict local martingale and $Q^0$, therefore, not equivalent to $Q^*$ (and hence to $P$).
The consequence for our setting is that  risk-neutral prices for many contingent claims are more expensive  than necessary.

Furthermore, we remark that by choosing the  savings account as the num\'eraire, one  invests the working capital in the savings account; see, e.g., \cite{Moller98}. However, in the long run, this investment does    not provide the fastest-growing, guaranteed strictly positive production portfolio. Therefore, when designing the risk management around the savings account, one is missing out  on the faster growing production portfolios that are possible because they involve the stock GOP instead of the savings  account for investing their reserves in the num\'eraire. The latter num\'eraire   explains intuitively through its faster average long-term growth the difference between the BN prices and  respective risk-neutral prices. 
This difference  relies on the assumption in Equation \eqref{theta2'''}. If  different dynamics of $\theta$ would be more appropriate in real markets, risk-neutral prices and BN prices could  coincide; see for example Section 6 in \cite{FilipovicPl09}.

\section{Insurance-finance arbitrages}		\label{sec:insurance-finance-arbitrages}
		
For application to the pricing of insurance contingent claims that  are linked   to the financial market, BN pricing and hedging is ideally suited. We show in the following that BN risk-minimization results in a price system that avoids so-called insurance-finance arbitrage, to be made precise below. This line of research was initiated in \cite{ArtznerEiseleSchmidt2024}.  
Moreover, BN risk-minimization allows us to extend the approach proposed in \cite{ArtznerEiseleSchmidt2024} without assuming the existence of an equivalent risk-neutral probability measure (and, hence, beyond the NFLVR condition), as we will show in the following. %

As before, $X=(X^0,X^1,\dots,X^n)^\top$ denotes the vector of primary security accounts  denominated in units of the  stock GOP $S^*$. For simplicity, we assume $S^*_0=S^{**}_0=1$. 
As in Section \ref{sec:Benchmark-neutral risk minimization}, we  consider in this section a self-financing trading strategy $\nu=(0,\zeta)$ that invests according to some process  $\zeta\in L^2_{\loc,Q^*}(X)$ in the primary security accounts, starting from an initial stock GOP denominated  value $\zeta_0^\top X_0 \in \R$; compare Equation \eqref{def:admissible-supermartingale}. The accumulated wealth until time $t$ is given by
$$ 
	V_t^{(0,\zeta)} = \zeta_0^\top   X_0 + \int_0^t \zeta_s \, dX_s. 
$$
For the following we fix a finite future horizon $T>0$ and call   $\zeta$ \emph{permissible} if $V_t^{(0,\zeta)} \ge 0$ for  all $0 \le t \le T<\infty$.

As insurance contract we consider a contract which  offers a variety of benefits at a future time in exchange for a single premium. We work with savings account denominated quantities and, without loss of generality, we consider  a single premium paid at time $t=0$ and all benefits received at the future time $T$. It is important to point out that insurance contracts are typically not $\bbF$-adapted, since insurance payoffs depend on the personal information of the insured. For simplicity, we consider only two time points $t=0$ and $T>0$ and describe the insurance information by $\bbG_0=\ccF_0$ and $\ccG_T$, which is possibly larger than $\ccF_T$. On the other hand, it is  natural that
$$ \ccF_T \subset \ccG_T. $$
For simplicity, we assume additionally that $\ccG_T \subset \ccF$, where we recall that we work on the probability space $(\Omega,\ccF,P)$. Alternatively, one could also extend the probability space, where special care has to be taken not to introduce forms of arbitrage.

Regarding the $i$-th insurance contract,  we denote by $p^i \in \mathbb{R}$  the savings account-denominated premium to be paid at time $0$ and
 a $\ccG_T$-measurable savings account-denominated benefit $\beta^i \ge 0$   to be received by the owner of the $i$th contract at time $T$, $i \ge 1$. This allows us to cover a wide range of contracts, including contracts linked to the financial market, such as  variable annuities, where we refer to \cite{BallottaEberleinSchmidtZeineddine2020, ballotta2021fourier} and references therein. While the mentioned papers focus on L\'evy processes, it is reasonable to expect that these settings can also be extended to cover general affine processes, as studied, for example, in \cite{KellerResselSchmidtWardenga2018}.

  The weakest choice of the conditioning $\sigma$-algebra in this setting is given by 
$$ \ccH := \ccG_0 \vee \ccF_T. $$
This information includes the full evolution of the financial market until time $T$ and the initial insurance information. On  one side this is a large $\sigma$-algebra, which means that the above is a rather weak assumption. On the other side, including $\ccG_T$ would imply that the insurer has access to the future information $\ccG_T$ at time $0$. Hence, $\ccH$ is also the largest possible choice.

 We consider  a homogeneous pool of standard insurance payoffs ${\beta^1},\beta^2,... $ with the following properties:
 \begin{assumption} \label{assum:ConditionalExpectations}
 	Assume that \begin{enumerate}[(i)]
 		\item $\beta^1,\beta^2,... $ are $\ccH$-conditionally independent,
 		\item $E_P[\beta^i\vert \ccH]= E_P[\beta^1\vert \ccH]$ for all $i \ge 1$,
 		\item $\textnormal{Var}_P[\beta^i \vert \ccH]=\textnormal{Var}_P[\beta^1 \vert \ccH]$ for all $i \ge 1$.
 	\end{enumerate}
 \end{assumption}

	We model a typical traditional insurance business where an insurance company holds the insurance contracts until maturity and does not hedge these actively.  This is more restrictive than an insurance business that applies BN risk-minimization, where hedging  is actively pursued.
An insurance company forms an \emph{allocation} (at time $0$) of $m\in\{1.2....\}$ contracts from the homogeneous pool of payoffs. \footnote{Contracts at different points in time are not treated here for simplicity - this simplifies the treatment significantly; see \cite{OberprillerRitterSchmidt2024}.}.  More precisely,  an allocation  $\psi=(\psi^i)_{1 \le i \le m}$ is a non-negative, $\ccG_0$-measurable vector. Here, $\psi^i \ge 0$  denotes the size of the $i$-th  insured payoff. 

At time $0$ the total savings account-denominated premium  $p=\sum_{i=1}^{m}p^i$   is received by the insurance company for the sum of the $m$ insured benefits. If this premium would be invested  totally in the stock GOP $S^{*}_t$, then it would generate the savings account-denominated payoff $p S_T^{*}$  at time $T$. %
Hence, for this strategy by the insurer, the savings account-denominated value $U^\psi_T$ of the allocation $\psi$ at time $T$ is given  by the sum of the paid premiums minus the associated benefits, i.e. 
$$
U^{\psi}_T:= \sum_{i = 1}^m \psi^i (p^iS^*_T\,-\beta^i).
$$

In the following we  study in the given insurance-finance market  arbitrages of the first kind where the insurance company can dynamically reallocate or hedge securities by trading in the financial market. We follow  \cite{kardaras2012market} and refer to \cite{karatzas2021portfolio} for a detailed exposition on the topic. Note that arbitrages of the first kind (A${}_1$s) are a weaker concept in comparison to  free lunches with vanishing risk (FLVRs). The situation we consider here, with existing  GOP $S^{**}$ of the extended financial market, might not guarantee NFLVR but guarantees no unbounded profit with bounded risk (NUPBR), as shown in \cite{karatzas2021portfolio}. %

\begin{definition}
	An \emph{insurance-finance arbitrage of the first kind} (IFA${}_1$) is an $\ccF_T$-measurable random variable $\varepsilon$ satisfying $P(\varepsilon \geq 0) = 1$ and $P(\varepsilon > 0) > 0$, for which there exists, for every $x > 0$, an insurance allocation $\psi$ and  a permissible self-financing trading strategy $\zeta$ satisfying	$x=\zeta_0^\top X_0$ and
	$$ U_T^{\psi} + V_T^{(0,\zeta)}
	S^*_T 
	\ge \varepsilon. $$
\end{definition}

\smallskip

If no such arbitrage exists, we shall say that NIFA${}_1$ holds. Note that this condition requires NA${}_1$, since $\psi\equiv {\bf {0}}$ is a permissible insurance strategy. It, however, does not require NFLVR. We now provide the simple direction of the fundamental theorem of asset pricing on the absence of IFA${}_1$ in our setting. 

\begin{proposition} \label{prop:FTIFA}
Assume that Assumption \ref{ass} and Assumption \ref{assum:ConditionalExpectations} hold. If the $i$-th insurance premium $p^i=p$ is bounded by the BN-price, i.e.~if 
\begin{align}\label{eq:QP BN}
	p \le E_P\bigg[ \frac{\beta^i}{S_T^{**}} \, \Big| \, \ccG_0\bigg] = 
 E^*\bigg[ \frac{\beta^i}{S_T^*} \, \Big| \, \ccG_0\bigg] 
\end{align}
for all $i\in\{1,...,m\}$, then NIFA${}_1$ holds.
\end{proposition}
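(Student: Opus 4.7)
The plan is to argue by contradiction and exploit the fact that, with $S^*$ as num\'eraire and under Assumption~\ref{ass}, $Q^*$ is a genuine equivalent probability measure and the $X$-processes are $Q^*$-local martingales. Suppose an IFA${}_1$ $\varepsilon$ exists. Then for every $x>0$ we can pick a $\ccG_0$-measurable allocation $\psi=\psi(x)$ and a permissible self-financing $\zeta=\zeta(x)$ with $\zeta_0^\top X_0=x$ satisfying $U_T^{\psi}+V_T^{(0,\zeta)}S_T^*\ge\varepsilon$. Divide by $S_T^*>0$ and take the $Q^*$-conditional expectation given $\ccG_0=\ccF_0$:
\[
E^*\!\left[\frac{U_T^{\psi}}{S_T^*}\,\Big|\,\ccG_0\right] + E^*\!\left[V_T^{(0,\zeta)}\,\big|\,\ccG_0\right] \;\ge\; E^*\!\left[\frac{\varepsilon}{S_T^*}\,\Big|\,\ccG_0\right].
\]

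For the financial leg, Theorem~\ref{thm:BNpricing}(iv) says $X$ is an $\bbF$-$Q^*$-local martingale, so $V^{(0,\zeta)}=\zeta_0^\top X_0+\int_0^\cdot \zeta_s^\top \cdot dX_s$ is a $Q^*$-local martingale; permissibility ($V^{(0,\zeta)}\ge 0$) upgrades it to a $Q^*$-supermartingale, yielding $E^*[V_T^{(0,\zeta)}\mid\ccG_0]\le V_0^{(0,\zeta)}=x$. For the insurance leg, the $\ccG_0$-measurability of $\psi^i$ and $p^i$ lets us factor them out of the conditional expectation, giving
\[
E^*\!\left[\frac{U_T^{\psi}}{S_T^*}\,\Big|\,\ccG_0\right] \;=\; \sum_{i=1}^{m}\psi^{i}\!\left(p^{i} - E^*\!\left[\frac{\beta^{i}}{S_T^*}\,\Big|\,\ccG_0\right]\right) \;\le\; 0,
\]
since the hypothesis~\eqref{eq:QP BN} makes each bracket non-positive and $\psi^i\ge 0$. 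Combining, $E^*[\varepsilon/S_T^*\mid\ccG_0]\le x$ for \emph{every} $x>0$, so letting $x\downarrow 0$ forces $E^*[\varepsilon/S_T^*\mid\ccG_0]\le 0$. As $\varepsilon/S_T^*\ge 0$, we conclude $\varepsilon=0$ $Q^*$-a.s., and hence $P$-a.s.\ by equivalence $Q^*\sim P$, contradicting $P(\varepsilon>0)>0$.

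The main obstacle I expect is purely a bookkeeping one rather than a conceptual one: one has to justify conditioning under $Q^*$ on $\ccG_0$ for random variables that are only $\ccG_T$-measurable (and not $\ccF_{T^*}$-measurable), which is precisely what the second equality in~\eqref{eq:QP BN}, together with $\ccG_0=\ccF_0$, is there to provide. Note also that Assumption~\ref{assum:ConditionalExpectations} is not actually needed for this sufficient-condition direction; it is only the common premium $p^i=p$ and the BN-price bound that are used, together with the supermartingale property of permissible self-financing portfolios under the equivalent pricing measure~$Q^*$.
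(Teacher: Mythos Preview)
Your argument is correct and follows a somewhat different route than the paper. The paper works under $P$ with the local-martingale deflator $Z=(S^{**})^{-1}$ and takes \emph{unconditional} expectations $E_P[Z_T(\,\cdot\,)]$; to handle the insurance leg it invokes Assumption~\ref{assum:ConditionalExpectations}(ii) to replace each $\beta^i$ by the representative $\beta^1$ before applying the premium bound. You instead work directly under $Q^*$ with $S^*$ as num\'eraire and condition on $\ccG_0=\ccF_0$, which lets the $\ccG_0$-measurable weights $\psi^i$ factor out so that the bound~\eqref{eq:QP BN} can be applied term by term. The two arguments are linked by the Bayes identity $E^*[Y\mid\ccG_0]=E_P[Y\,S_T^*/S_T^{**}\mid\ccG_0]$ (using $S_0^*=S_0^{**}=1$), so they are essentially equivalent; your version is, however, more economical, and your observation that Assumption~\ref{assum:ConditionalExpectations} is not actually used in this sufficient direction is correct---the paper's detour through $\ccH$ and $\beta^1$ is unnecessary once one conditions on $\ccG_0$. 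Finally, the measurability concern you flag is harmless in the paper's setup: since $\ccG_T\subset\ccF$ and $Q^*$ is defined on $(\Omega,\ccF)$ via $dQ^*=\Lambda_{T^*}\,dP$, conditional $Q^*$-expectations of $\ccG_T$-measurable random variables given $\ccG_0$ are well defined.
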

\begin{proof}
To begin with, we note that by Equation \eqref{eq:H H*}
\begin{align}\label{temp598}
E^*\bigg[ \frac{\beta^i}{S_T^*} \, \Big|\, \ccG_0\bigg] = E^*\bigg[ \frac{\beta^i}{S_T^*} \, \Big| \, \ccF_0\bigg] =
E_P\bigg[ \frac{S^*_T}{S_T^{**}}\frac{\beta^i}{S_T^{*}}  \, \Big| \,  \ccF_0\bigg] =
E_P\bigg[ \frac{\beta^i}{S_T^{**}}  \, \Big| \,  \ccF_0\bigg]
\end{align}
for all $i\in\{1,...,m\}$ because $\ccG_0=\ccF_0$. 
Furthermore, 
	note that by our previous assumptions there exists a GOP $S^{**}$ in the financial market including the savings account $S^0$. By Proposition 2.41 in \cite{karatzas2021portfolio}, the inverse of the GOP $S^{**}$ is a supermartingale deflator, and by Proposition 2.4 therein, this is equivalent to being a local martingale deflator, i.e., primary security accounts denominated by $S^{**}$ are $\bbF$-$P$-local martingales. 
	
Now, we	 choose a sequence of permissible wealth processes $V^{(0,\zeta_k)}$ with $x_k=\zeta^\top_0 X_0$ and insurance allocations $\psi_k$ such that $x_k \to 0$ and 
	 \begin{align}\label{temp:607}
	U_T^{\psi_k} + V_T^{(0,\zeta_k)}S^*_T \ge \varepsilon
	 \end{align}
 for some  non-negative random variable $\varepsilon \ge 0$. Denote by $Z=(S^{**})^{-1}$ the local martingale deflator associated to $S^{**}$. 
 
 For the insurance part, note that for any allocation $\psi_k$ one has
 \begin{align}
 	E_P\Big[ Z_T  U^{\psi_k}_T  \Big] &=
 	E_P\Big[ Z_T  \sum_{i = 1 }^m \psi^i_k (p\, S_T^{*}-\beta^i) \Big] \notag \\
 	&= \sum_{i = 1 }^m p \psi^i_k E_P\Big[  \frac{S^*_T}{S^{**}_T}  \Big] - 
 	E_P\Big[ Z_T \sum_{i = 1}^m \psi^i_k \beta^i \Big].  \label{AES1}
 \end{align}
 For the following, we introduce $\gamma_k=\sum_{i=1}^{m} \psi^i_k$. Moreover, since $Z_T$ and $\psi^1_k,\dots,\psi^m_k $ are $\ccH$-measurable, we obtain by Assumption \ref{assum:ConditionalExpectations}
 \begin{align}
 	E_P\Big[ Z_T \sum_{i = 1}^m \psi^i_k \beta^i \Big] &= E_P\Big[ Z_T \sum_{i = 1}^m \psi^i_kE_P \big[\beta^i | \ccH \big] \Big] 
 	= E_P\Big[ Z_T \sum_{i = 1}^n  \psi_k^iE_P \big[\beta^1 | \ccH \big] \Big] \notag\\
 	&= E_P\big[ Z_T \, \gamma_k \, \beta^1  \big]. \label{AES2}
 \end{align}
 
 Then,
	 by Equations \eqref{AES1} and \eqref{AES2},
	 \begin{align}
	 	E_P\Big[ Z_T \big( U_T^{\psi_k} + V_T^{(0,\zeta_k)}S^*_T \big) \Big]
	 	= E_P\big[  \gamma_k (p -Z_T\beta^1)\big] + 
	 	E_P\big[ Z_TS^*_T V_T^{(0,\zeta_k)} \big].
	 \end{align}
	 By condition \eqref{eq:QP BN},
	 \begin{align}
	 	E_P\big[  \gamma_k (p -Z_T\beta^1)\big] &= E_P\Big[\gamma_k \Big(p -\frac{\beta^1}{S_T^{**}}\Big) \Big] \le 0,
	 \end{align}
	 where we used \eqref{temp598}. Moreover, since $ZS^* V^{(0,\zeta_k)}$ is a nonnegative local martingale, and hence a supermartingale,
	 \begin{align*}
	 	E_P\big[ Z_TS^*_T V_T^{(0,\zeta_k)} \big] & \le  V_0^{(0,\zeta_k)} = x_k  
	 \end{align*}
	 for all $k \ge 1$. Hence, 
	 \begin{align*}
	 	E_P\Big[ Z_T \big( U_T^{\psi_k} + V_T^{(0,\zeta_k)}S^*_T \big) \Big] \le 0
	 \end{align*}
	 and, therefore, by Equation \eqref{temp:607},
	 \begin{align*}
	 0 \ge 	E_P\Big[ Z_T \big(U_T^{\psi_k} + V_T^{(0,\zeta_k)} S^*_T\big) \Big] \ge E_P[Z_T \varepsilon].
	 \end{align*}
	 Since $\varepsilon \ge  0$ this implies $\varepsilon = 0$ and, therefore, $\varepsilon$ cannot be an arbitrage of the first kind, such that NIFA${}_1$ holds. 
\end{proof}

\begin{remark}In comparison to \cite{ArtznerEiseleSchmidt2024}, and in particular to \cite{OberprillerRitterSchmidt2024}, we consider here insurance strategies with only finitely many contracts. For the shown result in Proposition \ref{prop:FTIFA}, this is the easiest approach and suffices our purpose. Extending this to asymptotic insurance-finance arbitrages of the first kind requires substantial more work, since Proposition B.1 of \cite{OberprillerRitterSchmidt2024} is no longer applicable in the setting with local martingale deflators. 	
\end{remark}

\subsection{The relation to the QP-rule}
\label{sec:QP}
The QP-rule, originally proposed in \cite{plachky1984conservation} and used in \cite{dybvig1992hedging} for the pricing of non-traded wealth, was applied to the insurance setting in \cite{ArtznerEiseleSchmidt2024}. It essentially combines the putative risk-neutral measure with the real-world probability measure in a way that does not allow asymptotic insurance-finance arbitrages. 

More precisely,  we consider the filtered probability space $(\Omega,\ccF_T,(\ccF_t)_{0 \le t \le T},P)$ for the financial market and the probability space $(\Omega,\ccG_T,P)$ for the insurance market where we recall that $\ccF_T \subset \ccG_T$. If we assume that NFLVR holds, then there exists an equivalent risk-neutral local martingale measure $Q$. The QP-rule specifies a unique measure $\QcirP$ which guarantees market consistency on the one side and valuation of insurance claims by conditional expectations on the other side. Indeed, it is the unique measure such that for a $\ccG_T$-measurable random variable $\beta \ge 0$, $E_{\QcirP}[\beta|\ccF_t] = E_Q[ E_P[\beta|\ccF_T] | \ccF_t]$. Asymptotic insurance-finance arbitrage-free pricing can now be performed by computing the $\QcirP$-conditional expectation of the discounted benefits. 
 
For the following conclusion, let us denote the density of $Q$ with respect to $P$, on $(\Omega,\ccF_T)$, by $Z_T=dQ / dP$.
\begin{proposition}
	If Assumption \ref{ass} and NFLVR hold with $S^0_0=S^{**}_0$, there exists an equivalent risk-neutral local martingale measure $Q$, such that the QP-rule coincides with the BN-price, i.e.
	\begin{align*}
		E^*\bigg[ \frac{\beta}{S_T^*} \,\Big|\, \ccG_0\bigg] =	E_P\bigg[ \frac{\beta}{S_T^{**}} \,\Big|\, \ccG_0\bigg] = E_{\QcirP} \bigg[ \frac{\beta}{S^0_T} \,\Big|\, \ccF_0 \bigg] = E_{\QcirP} \big[ \beta \,\Big|\, \ccF_0 \big]
	\end{align*}
	for any $\ccG_T$-measurable benefit $\beta \ge 0$. 
\end{proposition}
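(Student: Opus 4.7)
The plan is to establish the three equalities in the displayed chain in order, after first fixing a concrete choice for the risk-neutral measure. The skeleton of the argument is: identify the canonical $Q$, then apply Bayes' rule twice (once for the BN-measure, once for $Q$), and finally read off the QP-identity from its definition.

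First I would fix the risk-neutral measure as the one induced by the numéraire portfolio: using $S^0 \equiv 1$ and the normalization $S^*_0 = S^{**}_0 = 1$, set $Z_T := 1/S^{**}_T$. Under NFLVR this is a true $P$-martingale (it is the density process of the numéraire-portfolio measure in the sense of \cite{karatzas2021portfolio}), so $Q$ defined by $dQ/dP|_{\ccF_T} = Z_T$ is an equivalent risk-neutral probability measure for the $S^0$-discounted market. Note also that, with these normalizations, the density from Section~3 simplifies to $\Lambda_T = S^*_T/S^{**}_T$ with $\Lambda_0 = 1$, and in particular $\Lambda_T \cdot (1/S^*_T) = 1/S^{**}_T = Z_T$.

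For the first equality, I would extend $Q^*$ from $\ccF_T$ to $\ccG_T \subset \ccF$ in the natural way, $Q^*(A) := E_P[\Lambda_T \mathbf{1}_A]$ for $A \in \ccG_T$. Since $\ccG_0 = \ccF_0$ and $\Lambda_T$ is $\ccF_T$-measurable with $\Lambda_0 = 1$, Bayes' rule gives
$$E^*\!\left[\frac{\beta}{S^*_T}\,\Big|\,\ccG_0\right] \;=\; \frac{E_P\!\left[\Lambda_T\, \beta/S^*_T \,\big|\, \ccF_0\right]}{\Lambda_0} \;=\; E_P\!\left[\frac{\beta}{S^{**}_T}\,\Big|\,\ccG_0\right].$$
For the second equality, I would invoke the defining property of the QP-rule: with $Y := E_P[\beta | \ccF_T]$ (which is $\ccF_T$-measurable),
$$E_{\QcirP}[\beta \,|\, \ccF_0] \;=\; E_Q[Y \,|\, \ccF_0] \;=\; \frac{E_P[Z_T\, Y \,|\, \ccF_0]}{Z_0} \;=\; E_P\!\left[\frac{E_P[\beta|\ccF_T]}{S^{**}_T}\,\Big|\,\ccF_0\right] \;=\; E_P\!\left[\frac{\beta}{S^{**}_T}\,\Big|\,\ccF_0\right],$$
using $Z_0 = 1$, then pulling the $\ccF_T$-measurable factor $1/S^{**}_T$ inside the inner conditional expectation, and finally invoking the tower property. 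The last equality $E_{\QcirP}[\beta/S^0_T | \ccF_0] = E_{\QcirP}[\beta | \ccF_0]$ is immediate from $S^0 \equiv 1$.

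The main obstacle, conceptually, is the very first step: showing that one can take $Z_T = 1/S^{**}_T$ as the density of an equivalent probability measure. This requires that $1/S^{**}$ be a true $P$-martingale, not merely a local martingale deflator. Under NFLVR for the extended market this follows from the characterization of NFLVR via the numéraire portfolio; without NFLVR one would only have a supermartingale deflator and the argument would break. A second, more technical point is the measurability subtlety: $\beta$ is only $\ccG_T$-measurable, so both $Q^*$ and $Q$ have to be read as the natural extensions from $\ccF_T$ to $\ccG_T$ obtained by keeping the same $\ccF_T$-measurable density, which is precisely what the formalism in \cite{ArtznerEiseleSchmidt2024} and Assumption \ref{assum:ConditionalExpectations} allow.
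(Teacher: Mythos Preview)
The paper states this proposition without proof; the paragraph immediately following it moves directly to commentary on why the hypotheses are unlikely to be met in practice. Your argument is the natural one and is correct: taking $Q$ to be the measure with density $Z_T = 1/S^{**}_T$ (well-defined as a probability measure under NFLVR, in line with the paper's own reading of that condition in Section~\ref{sec:putative RN measure}, where the putative risk-neutral measure is identified exactly this way), the first equality is Bayes' rule for $Q^*$ using $\Lambda_T/S^*_T = 1/S^{**}_T$ and $\ccG_0=\ccF_0$; the middle equality is the defining property of $\QcirP$ combined with the $\ccF_T$-measurability of $1/S^{**}_T$ and the tower property; and the last equality is $S^0\equiv 1$. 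Your closing remarks on the measurability subtlety---extending $Q^*$ and $Q$ from $\ccF_T$ to $\ccG_T$ via the same $\ccF_T$-measurable density---are also on point and mirror how Equation~\eqref{temp598} in the proof of Proposition~\ref{prop:FTIFA} handles the analogous issue.

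One caveat worth sharpening: your justification that NFLVR forces $1/S^{**}$ to be a \emph{true} martingale is a bit quick. In general continuous incomplete markets, NFLVR only guarantees that \emph{some} ELMM exists; the minimal deflator can still be a strict local martingale. In the paper's setup the volatility matrix $b$ is invertible, so any ELMM density necessarily factors as $(1/S^{**})\cdot N$ with $N$ a positive local martingale orthogonal to $W$---this is what makes $1/S^{**}$ the canonical candidate, and is consistent with how the paper treats the risk-neutral measure throughout. You correctly flag this as the only nontrivial step.
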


In reality, it seems unlikely that the assumptions of the above result are satisfied. However,
one can conclude the BNP-rule, which would hold under NUPBR. Most generally, one has  the $P$-rule of  the benchmark approach, which employs just the  real-world probability measure $P$, which  requests NUPBR. The latter is equivalent to the existence of the GOP $S^{**}$. Under Assumption \ref{ass}, however, the $P$-rule is equivalent to the BNP-rule.

  	\section*{Conclusion}

  This paper introduces the method of  benchmark-neutral risk-minimization for the pricing and hedging of not fully replicable contingent claims. This approach offers several advantages over local risk-minimization. Specifically, benchmark-neutral risk-minimization does not depend on the existence of either the minimal martingale measure or an equivalent risk-neutral local martingale measure, thereby allowing for a broader and more flexible modeling framework.
The method is based on the assumption of the existence of a growth optimal portfolio (GOP) for the entire market, as well as a stock GOP for the market  without a savings account. Additionally, it assumes that the stock GOP, denominated in units of the overall GOP, forms a martingale -- an assumption that appears to be realistic in  practical contexts.
Using the stock GOP as the num\'eraire, it is shown that the prices derived under the equivalent BN pricing measure align with the minimal possible prices. In competitive markets, this should decrease the prices of insurance contracts. Furthermore,  a transparent risk management framework is introduced for the least expensive production of contingent claims within a line of business. This framework explicitly describes a refinancing process where the reserve capital is allocated to the stock GOP instead to the savings account, thereby minimizing in the long run production costs.
An important feature of  benchmark-neutral risk-minimization  is that it allows for the total stock GOP-denominated profit and loss of a sufficiently diversified portfolio of contingent claims to asymptotically approach zero as the number of contingent claims increases. This result underscores the potential for effective risk reduction through diversification.
Moreover,   the feasibility of the approach is illustrated by considering a simplified stock GOP model that incorporates a range of stylized empirical facts in its dynamics. This highly tractable model demonstrates the practical applicability and robustness of the BN  risk-minimization methodology.
Finally,  an insurance-finance market is introduced and associated insurance-finance arbitrage of the first kind studied. In this setting, it is shown that any price  equal to or below  the BN price guarantees absence of insurance-finance arbitrages of the first kind.

 \end{document}